\newcommand{\Viterbi}{\textsc{Viterbi Path}}
\newcommand{\Walk}{\textsc{Shortest Walk}}
\newcommand{\APSP}{\textsc{All-Pairs Shortest Paths}}
\newcommand{\kClique}{\textsc{Min-Weight $k$-Clique}}
\newcommand{\MinTriangle}{\textsc{Minimum Triangle}}
\declaretheorem[]{theorem}
\declaretheorem[]{definition}
\declaretheorem[]{lemma}
\declaretheorem[]{claim}
\declaretheorem[]{conjecture}
\newcommand{\R}{\mathbb{R}}
\newcommand{\eps}{\varepsilon}
\DeclarePairedDelimiter{\ceil}{\lceil}{\rceil}
\DeclareMathOperator*{\argmin}{arg\,min}
\DeclareMathOperator*{\argmax}{arg\,max}
\begin{document}

	\begin{titlepage}
\title{Improving Viterbi is Hard:\\Better Runtimes Imply Faster Clique Algorithms}
\date{}
\author{
	Arturs Backurs\\ MIT 
	\and Christos Tzamos\\ MIT
	}
\clearpage
\maketitle
\thispagestyle{empty}
\begin{abstract}
The classic algorithm of Viterbi computes the most likely path in a Hidden Markov Model (HMM) that results in a given sequence of observations. It runs in time $O(Tn^2)$ given a sequence of $T$ observations from a HMM with $n$ states. Despite significant interest in the problem and prolonged effort by different communities, no known algorithm achieves more than a polylogarithmic speedup.

In this paper, we explain this difficulty by providing matching conditional lower bounds. We show that the Viterbi algorithm runtime is optimal up to subpolynomial factors even when the number of distinct observations is small. Our lower bounds are based on assumptions that the best known algorithms for the All-Pairs Shortest Paths problem (APSP) and for the Max-Weight $k$-Clique problem in edge-weighted graphs are essentially tight.

Finally, using a recent algorithm by Green Larsen and Williams for online Boolean matrix-vector multiplication, we get a $2^{\Omega(\sqrt {\log n})}$ speedup for the Viterbi algorithm when there are few distinct transition probabilities in the HMM.
\end{abstract}
\end{titlepage}
	
	\section{Introduction}

A Hidden Markov Model (HMM) is a simple model that describes a random process for generating a sequence of observations. A random walk is performed on an underlying graph (Markov Chain) and, at each step, an observation is drawn from a probability distribution that depends only on the current state (the node in the graph). 
HMMs are a fundamental statistical tool and have found wide applicability in a number of fields such as Computational Biology~\cite{haussler1996generalized,krogh2001predicting,petersen2011signalp}, Signal Processing~\cite{gales1998maximum,huang1990hidden,kupiec1992robust}, Machine Learning and Computer Vision~\cite{starner1998real, agazzi1993hidden, zhang2001segmentation}.

One of the most important questions in these applications is computing the most likely sequence of states visited by the random walk in the HMM given the sequence of observations. 
Andrew Viterbi proposed an algorithm~\cite{viterbi1967error} for this problem that computes the solution in $O(Tn^2)$ time for any HMM with $n$ states and an observation sequence of length $T$. This algorithm is known as the \emph{Viterbi algorithm} and the problem of computing the most likely sequence of states is also known as the \emph{Viterbi Path} problem.

The quadratic dependence of the algorithm's runtime on the number of states is a long-standing bottleneck that limits its applicability to problems with large state spaces, particularly when the number of observations is large. A lot of effort has been put into improving the Viterbi algorithm to lower either the time or space complexity. 
Many works achieve speedups by requiring structure in the input, either explicitly by considering restricted classes of HMMs~\cite{felzenszwalb2004fast} or implicitly by using  heuristics that improve runtime in certain cases~\cite{esposito2009carpediem, kaji2010efficient}. For the general case, in \cite{lifshits2009speeding, mahmud2011speeding} it is shown how to speed up the Viterbi algorithm by $O(\log n)$ when the number of distinct observations is constant using the Four Russians method or similar ideas. More recently, in \cite{cairo2016decoding}, the same logarithmic speed-up was shown to be possible for the general case. Despite significant effort, only logarithmic improvements are known other than in very special cases. In contrast, the memory complexity can be reduced to almost linear in the number of states without significant overhead in the runtime~\cite{grice1997reduced, tarnas1998reduced, churbanov2008implementing}.

In this work, we attempt to explain this apparent barrier for faster runtimes by giving evidence of the inherent hardness of the Viterbi Path problem. 
In particular, we show that getting a polynomial speedup\footnote{Getting an algorithm running in time, say $O(Tn^{1.99})$.} would imply a breakthrough for fundamental graph problems. 
Our lower bounds are based on standard hardness assumptions for the All-Pairs Shortest Paths and the Min-Weight $k$-Clique problems and apply even in cases where the number of distinct observations is small.

We complement our lower bounds with an algorithm for Viterbi Path that achieves speedup $2^{\Omega(\sqrt{\log n})}$ when there are few distinct transition probabilities in the underlying HMM.

\paragraph{Our results and techniques} 

Our first lower bound shows that the Viterbi Path problem cannot be computed in time $O(Tn^2)^{1-\eps}$ for a constant $\eps > 0$ unless the APSP conjecture is false. The APSP conjecture states that there is no algorithm for the All-Pairs Shortest Paths problem that runs in truly subcubic\footnote{Truly subcubic means $O(n^{3-\delta})$ for constant $\delta > 0$.} time in the number of vertices of the graph. We obtain the following theorem:

\begin{restatable}{theorem}{viterbiapsp}
\label{thm:viterbiapsp}
	The $\Viterbi$ problem requires $\Omega(Tn^2)^{1-o(1)}$ time assuming the APSP Conjecture.
\end{restatable}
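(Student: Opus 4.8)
The plan is to reduce from the \NegTriangle{} problem on complete tripartite graphs with integer edge weights: given parts $A,B,C$ of size $m$ and weight matrices $W_{AB},W_{BC},W_{CA}$, decide whether there are $a\in A,\ b\in B,\ c\in C$ with $W_{AB}[a,b]+W_{BC}[b,c]+W_{CA}[c,a]<0$. By the subcubic equivalences of Vassilevska~Williams and Williams, the APSP Conjecture is equivalent to the statement that \NegTriangle{} (in this tripartite, integer-weighted form) has no truly subcubic algorithm. So it suffices to show that an $O\bigl((Tn^2)^{1-\eps}\bigr)$-time algorithm for \Viterbi{} would decide \NegTriangle{} on parts of size $m$ in time $O(m^{3-3\eps})$.

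I would build an HMM whose $-\log$-transformed transition graph is a strictly layered DAG with $n=\Theta(m)$ states and $T=3m$ observations, organized into $m$ \emph{rounds} of three transitions each. The state set is $A\cup B\cup C$, where the vertices of $A$ additionally serve as a ``rail''. The intended behaviour of the optimal path is: travel along the rail at cost $0$, except that in round $a$ it may ``dip'', following $a'\to b\to c\to a$ (where $a'$ is the rail vertex at which it is currently parked and $a$ is the round index), collecting $W_{AB}[a,b]+W_{BC}[b,c]+W_{CA}[c,a]$, and then continue on the rail. The single fixed transition matrix stores $W_{BC}$ on the $B\times C$ edges, the matrix $W_{CA}$ on the $C\times A$ edges, and a row-shifted copy of $W_{AB}$ on the $A\times B$ edges (so that the rail vertex that a path reaches after round $a-1$ carries exactly row $a$ of $W_{AB}$), with all remaining blocks set to $+\infty$; crucially, the triangle-closing edge $c\to a$ is then just the ordinary matrix entry $W_{CA}[c,a]$, and \emph{which} vertex $a$ the path is forced to land on after round $a$ is pinned down purely by that round's final-layer observation, whose emission gives weight $0$ to state $a$ and $+\infty$ to every other state. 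The emissions are used only for this kind of per-layer ``masking'' (forcing the $B$-states unreachable on middle layers, forcing the correct rail vertex on round-boundary layers, etc.), so that a path cannot dip halfway, dip from the wrong layer, or return to the wrong rail vertex. Then every source-to-sink path has total weight equal to the sum of the triangle weights of the rounds it dipped into, with the all-rail path giving $0$; hence the minimum equals $\min\bigl(0,\ \min_{a,b,c} W_{AB}[a,b]+W_{BC}[b,c]+W_{CA}[c,a]\bigr)$, which is negative iff the \NegTriangle{} instance is a ``yes'' instance.

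To assemble the proof: positivity of the probabilities is handled by observing that the DAG is strictly layered, so every source-to-sink path has exactly $T$ edges; adding a large constant to every transition/emission weight shifts all path costs by the same amount, so we may take all probabilities in $(0,1]$ and recover the answer by comparing the most likely path's log-probability to that of the all-rail path (turning the resulting weights into a properly normalized HMM is routine and does not change the complexity, since one may also work directly with the standard DAG-shortest-path formulation of \Viterbi{}). Building the HMM takes $O(m^2)$ time, and $n=\Theta(m)$, $T=3m$, so $Tn^2=\Theta(m^3)$; an $O\bigl((Tn^2)^{1-\eps}\bigr)$-time algorithm for \Viterbi{} would therefore decide \NegTriangle{} in $O(m^{3-3\eps})$ time, contradicting the equivalence and hence the APSP Conjecture. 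This already gives Theorem~\ref{thm:viterbiapsp} (the hard instances have $T=\Theta(n)$; by splitting $B$ and $C$ into blocks and using correspondingly more rounds one can extend the hardness to any $T$ polynomially related to $n$, but this is not needed for the statement).

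The main obstacle is encoding the inherently \emph{cyclic} triangle constraint into an \emph{acyclic}, time-layered model in which the transition matrix is identical at every step and emissions contribute only per-state, per-layer additive terms: a naive ``pick $a$, then $b$, then $c$'' path cannot charge $W_{CA}[c,a]$ because the state at the last step has forgotten $a$. The ``rail $=A$'' device resolves this by making $a$ the round index and the closing edge a plain matrix entry. The step most likely to hide a bug, and the one I would check most carefully, is verifying that the combination of the block-restricted transition matrix and the masking emissions leaves exactly the intended set of paths alive, so that the optimal value is precisely $\min\bigl(0,\min_{a,b,c}(\cdots)\bigr)$ with no cheaper ``cheating'' path.
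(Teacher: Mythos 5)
Your reduction is correct in its essentials but takes a genuinely different route from the paper. The paper also starts from a triangle problem (\MinTriangle{} on a $3$-partite graph $V_1\cup V_2\cup U$ with $|V_1|=|V_2|=n$, $|U|=m$, which is subcubic-equivalent to APSP), but it resolves the ``cyclic constraint'' differently: the third part $U$ is placed in the \emph{observation alphabet} rather than in the state space or the time axis. Its HMM has only the states $\{1,2\}\cup V_1\cup V_2$; the emission matrix carries the two weight matrices incident to $U$ (i.e. $B(v,u)=w_{v,u}$), the transition matrix carries only the $V_1\times V_2$ weights, and the observation sequence is $u,u,\bot$ for each $u\in U$ followed by a final $\bot_F$; the optimal path loops at node $1$, crosses $1\to v_1\to v_2\to 2$ during the block of the right $u$, and its cost equals the minimum triangle weight exactly. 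Your construction instead keeps all three weight matrices in the transition matrix, uses the time axis (rounds) together with the row-shifted $A\times B$ block to index the $A$-side of the triangle, and uses emissions purely as $0/\infty$ masks, reading off only the sign of the optimum relative to the all-rail path. Both routes work; what the paper's choice buys is a simpler instance with an exact equality between the Viterbi value and the minimum triangle weight, and the freedom to relate $T=\Theta(m)$ and $n$ by an arbitrary polynomial without any blocking argument (a flexibility that is reused for the small-alphabet result), whereas your hard instances have $T=\Theta(n)$ --- which, as you note, still suffices for the statement as written.

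Two slips in your write-up should be fixed. First, as literally specified (``all remaining blocks set to $+\infty$'') your transition matrix has no finite entries that keep a path on the rail: the $0$-weight rail transitions (e.g. self-loops and successor edges inside $A$, or a short chain of dummy states, so that a non-dipping path spends the three steps of round $a$ moving from rail vertex $a-1$ to rail vertex $a$ at cost $0$) must be added explicitly, and the middle-layer masks must admit these rail states alongside the $B$- and $C$-states; with that done, the case analysis you flag as delicate does go through (each round admits only ``full rail'' or ``full dip'', since the $B\times A$, $B\times B$, $A\times C$, $C\times C$ blocks are $+\infty$ and the boundary observation pins the path to vertex $a$). Second, the optimum is not $\min\bigl(0,\ \min_{a,b,c}(W_{AB}[a,b]+W_{BC}[b,c]+W_{CA}[c,a])\bigr)$ but rather $\sum_{a}\min\bigl(0,\ \min_{b,c}(W_{AB}[a,b]+W_{BC}[b,c]+W_{CA}[c,a])\bigr)$, since a path may dip in several rounds; this is harmless because the sign test you actually use (optimum negative iff a negative triangle exists) remains valid, but the stated equality should be corrected.
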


The proof of the theorem gives a reduction from All-Pairs Shortest Paths to the Viterbi Path problem. This is done by encoding the weights of the graph of the APSP instance as transition probabilities of the HMM or as probabilities of seeing observations from different states. The proof requires a large alphabet size, i.e. a large number of distinct observations, which can be as large as the number of total steps $T$. 

A natural question question to ask is whether there is a faster algorithm that solves the Viterbi Path problem when the alphabet size is much smaller than $T$, say when $T=n^2$ and the alphabet size is $n$. We observe that in such a case, the input size to the Viterbi Path problem is only $O(n^2)$: we only need to specify the transition probabilities of the HMM, the probabilities of each observation in each state and the sequence of observations. 
The Viterbi algorithm in this setting runs in $\Theta(T n^2) = \Theta(n^4)$ time. Showing a matching APSP based lower bound seems difficult because the runtime in this setting is quadratic in the input size while the APSP conjecture gives only $N^{1.5}$ hardness for input size $N$. To our best knowledge, all existing reduction techniques based on the APSP conjecture do not achieve such an amplification of hardness. In order to get a lower bound for smaller alphabet sizes, we need to use a different hardness assumption. 

For this purpose, we consider the $k$-Clique conjecture. It is a popular hardness assumption which states that it is not possible to compute a minimum weight $k$-clique on an edge-weighted graph with $n$ vertices in time $O(n^{k-\eps})$ for constant $k$ and $\eps > 0$. With this assumption, we are able to extend Theorem~\ref{thm:viterbiapsp} and get the following lower bound for the Viterbi Path problem on very small alphabets:

\begin{restatable}{theorem}{smallalpha}
	\label{thm:smallalpha}
	For any $C,\eps>0$, the $\Viterbi$ problem on $T = \Theta(n^{C})$ observations from an alphabet of size $\Theta(n^{\eps})$ requires $\Omega(Tn^2)^{1-o(1)}$ time assuming the $k$-Clique Conjecture for $k = \ceil{\frac C \eps}+2$.
\end{restatable}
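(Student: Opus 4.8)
\emph{Proof plan.} I will view an instance of $\Viterbi$ with an observation alphabet of size $\sigma$ through the standard negated‑log transformation: taking $-\log$ of all probabilities turns ``most likely path'' into ``minimum‑weight walk'' in a layered graph with $T{+}1$ layers of $n$ vertices each, where the edge weights between layer $t$ and layer $t{+}1$ form one of only $\sigma$ distinct $n\times n$ matrices $W_1,\dots,W_\sigma$ (selected by the $t$‑th observation), composed under the min‑plus product $\star$. So it suffices to build, in time $N^{k-\Omega(1)}$, a palette of matrices $W_1,\dots,W_\sigma\in\mathbb{R}^{n\times n}$ with $\sigma=\Theta(n^\eps)$ and a string $s\in[\sigma]^T$ with $T=\Theta(n^C)$, for $n=\Theta(N^{k/(C+2)})$, such that the minimum entry of $W_{s_1}\star\cdots\star W_{s_T}$ equals the minimum weight of a $k$‑clique in a given $N$‑vertex edge‑weighted graph $G$. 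Since then $Tn^2=\Theta(N^k)$, a $\Viterbi$ algorithm of runtime $(Tn^2)^{1-\delta}$ would solve \kClique{} in time $N^{k(1-\delta)}+N^{k-\Omega(1)}=N^{k-\Omega(1)}$, contradicting the $k$‑Clique Conjecture; the parameters $T=\Theta(n^C)$ and $\sigma=\Theta(n^\eps)$ are exactly those in the statement, and $k=\lceil C/\eps\rceil+2$ is the smallest value making $n^\eps$ large enough for the construction.

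For the construction I would first pass (by a standard reduction, comfortably within the time budget) to the $k$‑partite version: parts $V_1,\dots,V_k$ of size $N$, seeking one vertex per part minimizing the sum of the $\binom k2$ edge weights. The walk has two jobs, ``pick the clique'' and ``pay for it,'' and its state will remember only $O(1)$ of the chosen vertices at a time. Consequently the enumeration over the $N^k$ candidate tuples must be \emph{serialized}: the observation string $s$ encodes a nested‑loop schedule that sweeps through the tuples changing one coordinate at a time, and the palette consists of a small number of routing/shift matrices implementing ``advance the current coordinate, reset the lower ones'' together with a few weight‑carrying matrices. Because a single walk cannot simultaneously remember all endpoints of an unpaid edge, each pairwise weight $w(v_i,v_j)$ is added to the path cost at the moment both relevant coordinates are live in the state, and the schedule is arranged so that every $W_i$ depends on at most two of the clique's vertices — this is what keeps the palette small. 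Finally, to bring the alphabet down to $n^\eps$ rather than roughly $N$ (the number of distinct weight values), each weight is fed into the cost a constant number of observations at a time, one base‑$n^\eps$ digit per observation, digit $\ell$ contributing $\mathrm{digit}_\ell\cdot(n^\eps)^\ell$; this costs only a constant factor in $T$, absorbed by the $\Theta(n^C)$ slack.

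Correctness has two directions. First, every $k$‑clique of $G$ is realized by some completed run of the schedule whose path cost equals the clique weight, so the minimum entry of the product is at most the minimum clique weight. Conversely, the routing matrices force every finite‑cost walk through the gadget to be syntactically consistent — to have committed to a genuine $k$‑tuple of $G$ and to have paid for all $\binom k2$ of its edges — so its cost is the weight of an actual clique, giving the reverse inequality.

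I expect the main obstacle to be the gadget design of the second paragraph: simultaneously (i) keeping the number of distinct transition matrices at $\Theta(n^\eps)$, (ii) keeping the number of states at $\Theta(N^{k/(C+2)})$ while still enforcing global consistency of the $k$ chosen vertices across a walk of length $\Theta(n^C)$ in which only $O(1)$ of them are ever remembered, and (iii) matching the exact relations $T=\Theta(n^C)$, $\sigma=\Theta(n^\eps)$ demanded by the statement with a reduction still running in $N^{k-\Omega(1)}$ time — this balancing is precisely what forces $k=\lceil C/\eps\rceil+2$.
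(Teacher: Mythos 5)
Your high-level framing---reduce from min-weight $k$-clique, let the observation string serialize part of the enumeration while the HMM state carries the rest, and charge each edge weight at a moment when both endpoints are ``live''---is the right instinct, but the concrete plan has a quantitative gap that cannot be patched within your parameter choices. You work with $k$ balanced parts of size $N$ and set $n=\Theta(N^{k/(C+2)})$ so that $Tn^2=\Theta(N^k)$. If the walk's state remembers $c$ of the $k$ chosen vertices, then $N^c\le n$ forces $c\le k/(C+2)$, while the observation string must serialize the remaining $k-c$ coordinates, which needs $T\ge N^{k-c}$, i.e.\ $c\ge k-kC/(C+2)=2k/(C+2)$. These two requirements are contradictory, so a nested-loop schedule over all $N^k$ tuples with only $O(1)$ coordinates held in the state can never fit into $T=\Theta(n^C)$ observations. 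The missing idea is to make the $k$-partite instance \emph{unbalanced}: two parts $V_1,V_2$ of size $n$ are handled in parallel by the min-plus structure over the $n$ states (the walk commits to one $v_1$ and one $v_2$ while crossing from a start gadget to an end gadget), and only the remaining $p=k-2$ parts, each shrunk to size $m=\Theta(n^{C/p})\le n^{\eps}$, are serialized---their product has size $m^p=\Theta(n^C)=\Theta(T)$, which now fits. The appendix lemma reducing $\kClique$ to this unbalanced $k$-partite version supplies the matching $\Omega(n^2 m^p)^{1-o(1)}$ hardness.

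Two further points. First, ``it suffices to build a palette of $\sigma$ matrices'' is not literally sufficient: a $\Viterbi$ instance only realizes palettes of the form $W_s(u,v)=A(u,v)+B(v,s)$, i.e.\ all matrices share one base $A$ and differ by column offsets indexed by the symbol, so your routing and weight-carrying matrices must be designed within that factorization (the paper does this by fixing a directed gadget graph for $A$ and putting every observation-dependent cost into $B$). Second, you locate the alphabet bottleneck in the number of distinct weight values, but the real constraint is that the symbols must \emph{name the vertices} of the serialized parts, so that $B(\cdot,u)$ can charge the edges from $v_1$ and $v_2$ to each $u\in U_i$; this is exactly what forces $|U_i|\le n^{\eps}$ and hence $k=\ceil{C/\eps}+2$. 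The weights internal to a serialized tuple $(u_1,\dots,u_p)$ depend only on the position in the observation string, so they are precomputed at reduction time and written into the string in binary using two extra symbols read against a chain of $Z$ auxiliary states---no per-weight symbols are needed.
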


To show the theorem, we perform a reduction from the Min-Weight $k$-Clique problem. Given a Min-Weight $k$-Clique instance, we create an HMM with two special nodes, a start node and an end node, and enforce the following behavior of the optimal Viterbi path: Most of the time it stays in the start or end node, except for a small number of steps, during which it traverses the rest of the graph to move from the start to the end node. The time at which the traversal happens corresponds to a clique in the original graph of the Min-Weight $k$-Clique instance. We penalize the traversal according to the weight of the corresponding $k$-clique and thus the optimal path will find the minimum weight $k$-clique. 
Transition probabilities of the HMM and probabilities of seeing observations from different states encode edge-weights of the Min-Weight $k$-Clique instance. Further, we encode the weights of smaller cliques into the sequence of observations according to the binary expansion of the weights.

Our results of Theorems~\ref{thm:viterbiapsp} and~\ref{thm:smallalpha} imply that the Viterbi algorithm is essentially optimal even for small alphabets. We also study the extreme case of the Viterbi Path problem with unary alphabet where the only information available is the total number of steps $T$. We show a surprising behavior: when $T \le n$ the Viterbi algorithm is essentially optimal, while there is a simple much faster algorithm when $T > n$. See Section~\ref{sec:noobs} for more details.

We complement our lower bounds with an algorithm for Viterbi Path that achieves speedup $2^{\Omega(\sqrt{\log n})}$ when there are few distinct transition probabilities in the underlying HMM. Such a restriction is mild in applications where one can round the transition probabilities to a small number of distinct values.


\begin{restatable}{theorem}{algorithm}
	\label{thm:algorithm}
	When there are fewer than $2^{\eps \sqrt {\log n} }$  distinct transition probabilities for a constant $\eps > 0$, there is a ${Tn^2} / {2^{\Omega(\sqrt {\log n}) }}$ randomized algorithm for the $\Viterbi$ problem that succeeds whp.
\end{restatable}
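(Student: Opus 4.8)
\section*{Proof proposal for Theorem~\ref{thm:algorithm}}

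The plan is to cast the Viterbi DP as an online, iterated matrix--vector product and then invoke the Larsen--Williams online Boolean matrix--vector multiplication algorithm. Writing $w_{ij}=-\log P_{ij}$ for the transition costs and $v_t[j]$ for the cost of the most likely length-$t$ prefix ending in state $j$, the recurrence is $v_{t+1}[j]=\min_i\bigl(v_t[i]+w_{ij}\bigr)+(\text{emission cost of }o_{t+1}\text{ at }j)$. The emission term is just a coordinatewise addition of an $n$-vector ($O(n)$ per step), and recovering the actual state sequence at the end is a cheap backtracking pass once the $v_t$'s (or the arg-mins) are known, so everything reduces to performing $T$ online $(\min,+)$ matrix--vector products $v\mapsto W\otimes v$ with the \emph{fixed} matrix $W=(w_{ij})$, which by hypothesis has at most $d<2^{\eps\sqrt{\log n}}$ distinct entries. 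The Larsen--Williams result gives, after polynomial-time preprocessing of an $n\times n$ Boolean matrix, a batch of $n$ Boolean matrix--vector products in total time $n^3/2^{\Omega(\sqrt{\log n})}$; splitting the $T$ steps into $\lceil T/n\rceil$ batches against the same preprocessed matrices, it therefore suffices to \emph{simulate each $(\min,+)$ product using $2^{o(\sqrt{\log n})}$ Boolean matrix--vector queries} against a fixed family of $\{0,1\}$ matrices derived from $W$.

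To build that family I would exploit the few-distinct-values hypothesis: if $s_1<\dots<s_d$ are the distinct entries of $W$, let $B^{(k)}$ be the $\{0,1\}$ matrix with $B^{(k)}_{ij}=1$ iff $w_{ij}=s_k$, and preprocess $B^{(1)},\dots,B^{(d)}$ together with their transposes once. Then $(W\otimes v)[j]\le\tau$ iff for some level $k$ there is an $i$ with $w_{ij}=s_k$ and $v[i]\le\tau-s_k$, i.e.\ iff $\bigl(B^{(k)\top}\,\mathbbm{1}\{i:\,v[i]\le\tau-s_k\}\bigr)[j]=1$ for some $k$; so $d$ Boolean queries test one common candidate value $\tau$ simultaneously for all coordinates $j$. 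To recover $W\otimes v$ exactly I would first subtract $\mu:=\min_i v[i]$ from $v$, argue that only entries lying in a bounded window $[0,R]$ above $\mu$ can ever lie on an optimal Viterbi path (capping the rest at ``$+\infty$''), and then simply test \emph{all} of the at most $R$ relevant candidate values $\tau$ --- no per-coordinate cleverness is needed since each test is global. This costs $O(Rd)$ Boolean queries per step, hence $O(TRd)\cdot n^2/2^{\Omega(\sqrt{\log n})}$ overall; choosing $\eps$ small enough (relative to the constant hidden in the Larsen--Williams exponent) that $Rd=2^{o(\sqrt{\log n})}$, this is $Tn^2/2^{\Omega(\sqrt{\log n})}$. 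The randomization and the ``whp'' guarantee come from the Larsen--Williams subroutine, plus, if needed, a random perturbation of the $v$-entries to make all values distinct so the thresholding is clean.

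The main obstacle I anticipate is exactly the value-management step, i.e.\ pinning down the window $[0,R]$ and keeping $R=2^{o(\sqrt{\log n})}$. One has to show that a state whose prefix cost exceeds the current minimum by more than $\Delta_W:=\max_k s_k-\min_k s_k$ can be discarded: this is immediate when every transition has positive probability, since replacing the first edge of any optimal continuation changes its cost by at most $\Delta_W$ (a ``catch-up'' step), but it needs more care in the presence of forbidden transitions, where an optimal path can be forced through an otherwise expensive prefix. One also has to bound the number of \emph{distinct} candidate values falling inside such a window --- a naive bound scales with the dynamic range (and bit-precision) of the probabilities rather than with $d$, so this is where the remaining restriction on the model sits and where the argument must be most careful. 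A secondary, minor point is the regime $T<n$, where padding a batch up to $n$ queries is wasteful; there one instead pads only to $\max(T,n)$ and absorbs the loss, or appeals to a per-query version of the Larsen--Williams bound.
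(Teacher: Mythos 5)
Your high-level framing matches the paper: reduce the Viterbi DP to $T$ online $(\min,+)$ matrix--vector products with the fixed cost matrix, split that matrix by its $d<2^{\eps\sqrt{\log n}}$ distinct values into Boolean ``level'' matrices, and drive everything with the Larsen--Williams online Boolean matrix--vector algorithm (which, incidentally, needs no preprocessing and is an amortized online guarantee, so your batching/padding worries for $T<n$ and the ``polynomial preprocessing'' framing are not needed). The genuine gap is exactly the step you flag yourself: recovering the $(\min,+)$ product by thresholding. Your scheme asks, for every candidate value $\tau$, whether $(W\otimes v)[j]\le\tau$, and its cost is $O(Rd)$ Boolean queries per step where $R$ is the number of candidate values in a window above $\min_i v[i]$. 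Under the theorem's hypotheses nothing bounds $R$ by $2^{o(\sqrt{\log n})}$: the emission costs $B(u,s_t)$ and hence the entries of $v_t$ are arbitrary reals (the paper explicitly allows arbitrary positive log-weights), so $v_t$ generically has $\Theta(n)$ distinct values even inside a window of width $\max_k s_k-\min_k s_k$, and the window argument itself breaks once $+\infty$ (forbidden) transitions are present, as you note. With $R=\Theta(n)$ the per-step cost becomes $n^3/2^{\Omega(\sqrt{\log n})}$, which is worse than the trivial $O(n^2)$, so the proposal as written does not prove the stated theorem; it proves it only under an additional bounded-precision/bounded-range assumption that the theorem does not make.

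The missing idea, and the way the paper closes exactly this hole, is to work with \emph{ranks} instead of values. Per query, sort the entries of $v$ (cost $O(n\log n)$), partition the sorted indices into $p=2^{\alpha\sqrt{\log n}}$ consecutive blocks $S_1,\dots,S_p$, and for $k=1,\dots,p$ compute the Boolean product of the level matrix with the block indicator $\mathbb{I}_{S_k}$. For each output coordinate $j$, the first block $k$ whose product bit is $1$ contains the minimizer of $\min_i(A_{j,i}+v_i)$ restricted to that level (earlier blocks contain only $+\infty$ entries in row $j$), so you finish $r_j$ by brute-force scanning the $n/p$ indices of that single block. This uses only $p$ Boolean queries plus $O(n^2/p)$ scanning per step, independent of the numerical range of the entries, and combining the $d$ levels via $r_i=\min_k(r^k_i+a^k)$ gives the $Tn^2/2^{\Omega(\sqrt{\log n})}$ bound for $\eps,\alpha$ small enough. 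If you replace your ``test all candidate $\tau$'' step with this sort-and-block search, the rest of your outline (DP as online $(\min,+)$ products, level decomposition, backtracking for the path) goes through as in the paper.
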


We achieve this result by developing an algorithm for online $(\min,+)$ matrix-vector multiplication for the case when the matrix has few distinct values. Our algorithm is based on a recent result for online Boolean matrix-vector multiplication by Green Larsen and Williams~\cite{larsen2016faster}.

Finally, we provide an algorithm that runs in $O(Tn^{2})^{1-\alpha}$ \emph{non-deterministic} time for a constant $\alpha>0$. This provides an evidence that lower bounds for Viterbi Path based on Strong Exponential Time Hypothesis (SETH) under deterministic reductions are not possible. See Section~\ref{sec:nondeterministic} for more details.

The results we presented above hold for dense HMMs. For sparse HMMs that have at most $m$ edges out of the $n^2$ possible ones, i.e. the transition matrix has at most $m$ non-zero probabilities, the $\Viterbi$ problem can be easily solved in $O(Tm)$ time. 
The lower bounds that we presented above can be adapted directly for this case to show that no faster algorithm exists that runs in time $O(Tm)^{1-\eps}$. See the corresponding discussion in the appendix.

\paragraph{Hardness assumptions}
There is a long list of works showing conditional hardness for various problems based on the All-Pairs Shortest Paths problem hardness assumption~\cite{roditty2004dynamic,williams2010subcubic,abboud2014popular,abboud2015subcubic,abboud2015matching}.
Among other results, \cite{williams2010subcubic} showed that finding a triangle of minimum weight in a weighted graph is equivalent to the All-Pairs Shortest Paths problem meaning that a strongly subcubic algorithm for the $\MinTriangle$ problem
implies a strongly subcubic algorithm for the All-Pairs Shortest Paths problem and the other way around. 
Computing a min-weight triangle is a special case of the problem of computing a min-weight $k$-clique in a graph for a fixed integer $k$. This is a very well studied computational problem and despite serious efforts, the best known algorithm for this problem still runs in time $O(n^{k-o(1)})$, which matches the runtime of the trivial algorithm up to subpolynomial factors. The assumption that there is no $O(n^{k-\eps})$ time algorithm for this problem, has served as a basis for showing conditional hardness results for several problems on sequences \cite{abboud2015if, abboud2014consequences} and computational geometry~\cite{backurs2016tight}.
	
\section{Preliminaries}

\paragraph*{Notation} For an integer $m$, we denote the set $\{1,2,\ldots,m\}$ by $[m]$.

\begin{definition}[Hidden Markov Model] \label{def:HMM}
	A \emph{Hidden Markov Model} (HMM) consists of a directed graph with $n$ distinct hidden states $[n]$ with transition probabilities $\tilde A(u,v)$ of going from state $u$ to state $v$. In any given state, there is a probability distribution of symbols that can be observed and $\tilde B(u,s)$ gives the probability of seeing symbol $s$ on state $u$. The symbols come from an alphabet $[\sigma]$ of size $\sigma$. An HMM can thus be represented by a tuple $(\tilde A, \tilde B)$.
\end{definition}

\subsection{The Viterbi Path Problem} 
Given an HMM and a sequence of $T$ observations, the Viterbi algorithm \cite{viterbi1967error} outputs a sequence of $T$ states that is most likely given the $T$ observations. More precisely, let $S=(s_1, \ldots, s_T)$ be the given sequence of $T$ observations where symbol $s_t \in [\sigma]$ is observed at time $t=1, \ldots, T$. Let $u_t \in [n]$ be the state of the HMM at time $t=1, \ldots, T$. The Viterbi algorithm finds a state sequence $U=(u_0,u_1, \ldots, u_T)$ starting at $u_0=1$ that maximizes $\Pr[U|S]$. The problem of finding the sequence $U$ is known as the \emph{Viterbi Path} problem. In particular, the Viterbi Path problem solves the optimization problem
$$\argmax_{u_0=1,u_1,\dots,u_T} \prod_{t=1}^T \left[ \tilde A(u_{t-1},u_{t}) \cdot \tilde B(u_{t},s_t) \right].$$

The Viterbi algorithm solves this problem in $O(T n^2)$ by computing for $t=1\ldots T$ the best sequence of length $t$ that ends in a given state in a dynamic programming fashion.
When run in a word RAM model with $O(\log n)$ bit words, this algorithm is numerically unstable because even representing the probability of reaching a state requires linear number of bits. Therefore, $\log$ probabilities are used for numerical stability since that allows to avoid underflows~\cite{young1997htk,amengual1998efficient,li2009design,lee2007design,huang2001spoken}. To maintain numerical stability and understand the underlying combinatorial structure of the problem, we assume that the input is given in the form of log-probabilities, i.e. the input to the problem is $A(u,v)=-\log \tilde A(u,v)$ and $B(u,s)=-\log \tilde B(u,s)$ and focus our attention on the Viterbi Path problem defined by matrices $A$ and $B$.

\begin{definition}[Viterbi Path Problem] \label{def:viterbipath}
The $\Viterbi$ problem is specified by a tuple $(A,B,S)$ where $A$ and $B$ are $n \times n$ and $n \times \sigma$ matrices, respectively, and $S = (s_1, \ldots, s_T)$ is a sequence of $T = n^{\Theta(1)}$ observations $s_1, \ldots, s_T \in [\sigma]$ over an alphabet of size $\sigma$.
Given an instance $(A,B,S)$ of the $\Viterbi$ problem, our goal is to output a sequence of  vertices $u_0, u_1, \ldots, u_T \in [n]$ with $u_0 = 1$ that solves
$$\argmin_{u_0=1,u_1,\dots,u_T} \sum_{t=1}^T \left[  A(u_{t-1},u_{t}) +  B(u_{t},s_t) \right].$$
\end{definition}

We can assume that $\log$ probabilities in matrices $A$ and $B$ are arbitrary positive numbers without the restriction that the corresponding probabilities must sum to $1$. See Appendix~\ref{app:sumprob} for a discussion.

A simpler special case of the $\Viterbi$ problem asks to compute the most likely path of length $T$ without any observations. 
\begin{definition}[Shortest Walk Problem]
Given an integer $T$ and a weighted directed graph (with possible self-loops) on $n$ vertices with edge weights specified by a matrix $A$, the $\Walk$ problem asks to compute a sequence of vertices $u_0=1,u_1,\dots,u_T \in [n]$ that solves
$$\argmin_{u_0=1,u_1,\dots,u_T} \sum_{t=1}^T A(u_{t-1},u_{t}).$$
\end{definition}

It is easy to see that the $\Walk$ problem corresponds to the $\Viterbi$ problem when $\sigma = 1$ and $B(u,1)=0$ for all $u \in [n]$.

\subsection{Hardness assumptions}
\label{sec:hardness}
We use the hardness assumptions of the following problems.

\begin{definition}[$\APSP$ (APSP) problem]
	\label{def_apsp}
Given an undirected graph $G=(V,E)$ with $n$ vertices and positive integer weights on the edges, find the shortest path between $u$ and $v$ for every $u,v \in V$.
\end{definition}

 The \emph{APSP conjecture} states that the $\APSP$ problem requires $\Omega(n^3)^{1-o(1)}$ time in expectation.

\begin{conjecture}[APSP conjecture]
The $\APSP$ problem on a graph with $n$ vertices and positive integer edge-weights bounded by $n^{O(1)}$ requires $\Omega(n^3)^{1-o(1)}$ time in expectation.
\end{conjecture}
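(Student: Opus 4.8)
The final statement is the \emph{APSP conjecture}: a fine-grained hardness hypothesis asserting an $\Omega(n^3)^{1-o(1)}$ lower bound for $\APSP$, rather than a theorem derivable from weaker principles. Consequently there is, strictly speaking, no proof to sketch. An unconditional cubic time lower bound for a concrete problem in P such as $\APSP$ is well beyond current techniques: in general models of computation (for instance the word RAM under which the Viterbi algorithm is analyzed) we cannot presently prove super-linear time lower bounds for any explicit problem. The plan, therefore, is not to derive the bound but to explain what evidence makes it a safe assumption and to indicate why an honest attempt at a proof runs into a fundamental obstacle.

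First I would marshal the structural evidence. The key fact is the subcubic equivalence of \cite{williams2010subcubic}: $\APSP$ has a truly subcubic $O(n^{3-\delta})$ algorithm if and only if $\MinTriangle$ does, and likewise for an entire class of matrix and graph problems including min-plus matrix product. Thus the conjecture is not an isolated guess about one problem; it asserts that a whole equivalence class simultaneously resists subcubic algorithms, and refuting it for $\APSP$ would refute it for all of them at once. This robustness under reductions is the principal reason the hypothesis is trusted.

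Next I would pin down the algorithmic state of the art, which marks exactly how far progress has stalled. Floyd--Warshall solves $\APSP$ in $O(n^3)$, and despite decades of effort only subpolynomial savings are known, the best running time being $n^3/2^{\Omega(\sqrt{\log n})}$; the cubic exponent has never been broken. An honest route toward the conjecture would try to convert this persistent $2^{\Theta(\sqrt{\log n})}$ barrier---which stems from the same Four-Russians / few-distinct-values phenomenon that Theorem~\ref{thm:algorithm} exploits---into a matching lower bound, presumably via a communication or circuit argument specialized to min-plus products.

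The hard part is precisely that last step, and it is hard for principled reasons: any method strong enough to certify an $n^{3-o(1)}$ time lower bound for $\APSP$ in the word RAM would yield an unconditional super-linear time lower bound for an explicit problem in P, which no known technique can provide. Because of this barrier the paper does not attempt to prove the statement; it adopts the APSP conjecture as a hypothesis and contributes conditional reductions (Theorems~\ref{thm:viterbiapsp} and~\ref{thm:smallalpha}) that transfer its assumed hardness to the $\Viterbi$ problem.
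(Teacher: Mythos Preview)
Your assessment is correct: the statement is a \emph{conjecture}, not a theorem, and the paper gives no proof of it---it is introduced in Section~\ref{sec:hardness} as a hardness assumption and used as such throughout. Your explanation of why no proof exists and of the supporting evidence (subcubic equivalences from \cite{williams2010subcubic}, the $n^3/2^{\Omega(\sqrt{\log n})}$ barrier) is accurate and in fact more detailed than what the paper itself provides.
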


\begin{definition}[$\kClique$ problem]
	\label{def_clique}
	Given a complete graph $G=(V,E)$ with $n$ vertices and positive integer edge-weights, output the minimum total edge-weight of a $k$-clique in the graph.
\end{definition}
For any fixed constant $k$, the best known algorithm for the $\kClique$ problem runs in time $O(n^{k-o(1)})$ and the \emph{$k$-Clique conjecture} states that it requires $\Omega(n^k)^{1-o(1)}$ time.

\begin{conjecture}[$k$-Clique conjecture]
The $\kClique$ problem on a graph with $n$ vertices and positive integer edge-weights bounded by $n^{O(k)}$ requires $\Omega(n^k)^{1-o(1)}$ time in expectation.
\end{conjecture}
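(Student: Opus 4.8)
The plan is to reduce the \kClique{} problem to the \Viterbi{} problem so that any algorithm beating the $O(Tn^2)$ bound by a polynomial factor yields a polynomial speedup for min-weight $k$-clique. The central idea, matching the informal description above, is to build an HMM whose optimal path idles in a cost-free start state, performs a single short traversal that selects and pays for a $k$-clique, and then idles in a cost-free end state. Two of the $k$ clique vertices will be chosen by the HMM transitions during the traversal (contributing the $n^2$ factor), and the remaining $k-2$ vertices will be selected by the \emph{time} at which the traversal is launched (contributing the $T$ factor).

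First I would reduce to an \emph{unbalanced} version of \kClique{} in which two color classes have $n$ vertices and the other $k-2$ color classes have only $\sigma = \Theta(n^\eps)$ vertices each; here the brute-force time is $n^2 \sigma^{k-2}$. I claim this unbalanced version is hard under the $k$-Clique Conjecture: given a balanced instance on $M$ vertices, split each of the $k-2$ small classes into $M/\sigma$ blocks of size $\sigma$ and solve the $(M/\sigma)^{k-2}$ resulting unbalanced instances; a $(n^2\sigma^{k-2})^{1-\delta}$-time algorithm for the unbalanced problem (with $n=M$) then solves the balanced one in $M^{k-2\delta+o(1)}$ time, contradicting the conjecture for every $\delta>0$. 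Setting $\sigma=\Theta(n^\eps)$ and $k = \ceil{C/\eps}+2$ makes the unbalanced brute-force time $n^{2+\eps(k-2)} = n^{C+2} = \Theta(Tn^2)$, which is exactly the regime we need.

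Next I would construct the HMM on $n$ states (plus $O(1)$ control states). The start and end states carry weight-$0$ self-loops so idling is free, and the only route from start to end passes through the traversal gadget. In the gadget the path enters a state $u$ of the first big class, loops there while reading $k-2$ symbols that name the chosen small-class vertices $x_1,\dots,x_{k-2}$ and paying $B(u,x_i)=w(u,x_i)$; it then takes the single ``big'' transition $u\to v$ with $A(u,v)=w(u,v)$ into a state $v$ of the second big class, loops there re-reading the same symbols and paying $B(v,x_i)=w(v,x_i)$, and finally reads an $O(\log n)$-symbol block encoding, in binary with appropriate place-value scaling, the precomputed internal weight $\sum_{i<j} w(x_i,x_j)$ of the selected $(k-2)$-subset. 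The sequence $S$ is the concatenation, over all $\sigma^{k-2}$ choices of $(x_1,\dots,x_{k-2})$, of one such window, so the launch time determines the small-class vertices and $T=\Theta(\sigma^{k-2})=\Theta(n^{C})$. Because $S$ is fixed, the symbols read while looping at $u$ and at $v$ coincide, which automatically forces the same $x_i$'s in both halves, and the alphabet only needs one symbol per small-class vertex, i.e. size $\Theta(n^\eps)$.

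For correctness I would argue that every edge of the candidate clique is paid exactly once --- $w(u,v)$ through $A$, the $2(k-2)$ cross edges through $B$, and the $\binom{k-2}{2}$ internal edges through $S$ --- so a traversal launched in the window of $\{x_1,\dots,x_{k-2}\}$ with choices $u,v$ costs the total weight of that clique plus a fixed offset, and minimizing over launch times and over the $(u,v)$ picked by the DP computes the min-weight $k$-clique. The routine but delicate parts are enforcing the path shape (exactly one launch, alignment to window boundaries, mandatory arrival at end) via control states and $+\infty$ weights, and choosing the place-value scalings so that the binary-encoded internal weights, the transition weights, and the observation weights add in separate ``registers'' without carrying; both are standard since the paper permits the log-weights to be arbitrary positive reals. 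The main obstacle is the coupling step: paying the $u$-to-small and $v$-to-small cross edges over an alphabet of size only $n^\eps$ while keeping the $n$ big-class states genuinely active, since it is precisely this asymmetry that forces the reduction through the unbalanced clique problem and yields the exponent $k=\ceil{C/\eps}+2$ rather than $C+2$. Finally, a hypothetical $O(Tn^2)^{1-\delta}$ algorithm solves the unbalanced instance in $(n^{C+2})^{1-\delta}$ time and hence, by the block-splitting reduction, refutes the $k$-Clique Conjecture, giving the stated $\Omega(Tn^2)^{1-o(1)}$ lower bound.
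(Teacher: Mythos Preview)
The statement you were asked to address is not a theorem but a \emph{conjecture}: it is a hardness assumption the paper adopts, not a result the paper proves. There is no proof of the $k$-Clique Conjecture in the paper (nor anywhere else --- it is an open problem), so there is nothing to compare your proposal against.

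What you have actually written is a proof sketch for a \emph{different} statement, namely Theorem~\ref{thm:smallalpha}: hardness of $\Viterbi$ with small alphabet assuming the $k$-Clique Conjecture. Your reduction goes in the direction $\kClique \to \Viterbi$, which shows that $\Viterbi$ is at least as hard as $\kClique$; it says nothing about whether $\kClique$ itself requires $\Omega(n^k)^{1-o(1)}$ time. So as a proof of the stated conjecture your proposal is a category error. If the intent was to prove Theorem~\ref{thm:smallalpha}, your sketch is close in spirit to the paper's construction (unbalanced $k$-partite instance with two ``big'' classes of size $n$ and $k-2$ ``small'' classes of size $\Theta(n^\eps)$; start/end idle states; a single traversal whose launch window picks the small vertices; binary encoding of the internal small-clique weight in $S$), though the paper realizes the gadget with a chain of intermediate nodes $a_{v,i}$ and $b_{v,i}$ rather than looping at $u$ and $v$, and handles alignment via $\bot,\bot_0,\bot_1,\bot_F$ symbols rather than control states.
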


For $k=3$, the $\textsc{Min-Weight $3$-Clique}$ problem asks to find the minimum weight triangle in a graph.
This problem is also known as the $\MinTriangle$ problem and under the $3$-Clique conjecture it requires $\Omega(n^3)^{1-o(1)}$ time. The latter conjecture is equivalent to the APSP conjecture~\cite{williams2010subcubic}.

We often use the following variant of the $\kClique$ problem:

\begin{definition}[$\kClique$ problem for $k$-partite graphs] \label{kpartite}
Given a complete $k$-partite graph $G=(V_1 \cup \ldots \cup V_k, \ E)$ with $|V_i|=n_i$ and positive integer weights on the edges, output the minimum total edge-weight of a $k$-clique in the graph.

\end{definition}

If for all $i,j$ we have that $n_i = n_j^{\Theta(1)}$, it can be shown that the $\kClique$ problem for $k$-partite graphs requires $\Omega\left(\prod_{i=1}^k n_i \right)^{1-o(1)}$ time assuming the $k$-Clique conjecture. We provide a simple proof of this statement in the appendix.

	\section{Hardness of $\Viterbi$}

We begin by presenting our main hardness result for the $\Viterbi$ problem.

\viterbiapsp*

To show APSP hardness, we will perform a reduction from the $\MinTriangle$ problem (described in Section~\ref{sec:hardness}) to the $\Viterbi$ problem. In the instance of the $\MinTriangle$ problem, we are given a $3$-partite graph $G=(V_1 \cup V_2 \cup U, \ E)$ such that $|V_1| = |V_2| = n$, $|U| = m$.  We want to find a triangle of minimum weight in the graph $G$. To perform the reduction, we define a weighted directed graph $G'=(\{1,2\} \cup V_1 \cup V_2, \ E')$. $E'$ contains all the edges of $G$ between $V_1$ and $V_2$, directed from $V_1$ towards $V_2$, edges from $1$ towards all nodes of $V_1$ of weight $0$ and edges from all nodes of $V_2$ towards $2$ of weight $0$. We also add a self-loops at nodes $1$ and $2$ of weight $0$. 

We create an instance of the $\Viterbi$ problem $(A,B,S)$ as described below. Figure~\ref{fig:viterbiapsp} illustrates the construction of the instance.
\begin{itemize}
\item Matrix $A$ is the weighted adjacency matrix of $G'$ that takes value $+\infty$ (or a sufficiently large integer) for non-existent edges and non-existent self-loops. 
\item The alphabet of the HMM is $U \cup \{ \bot, \bot_F \}$ and thus matrix $B$ has $2n+2$ rows and $\sigma = m+2$ columns. For all $v \in V_1 \cup V_2$ and $u \in U$, $B(v,u)$ is equal to the weight of the edge $(v,u)$ in graph $G$. Moreover, for all $v \in V_1 \cup V_2$, $B(v,\bot) = +\infty$ (or a sufficiently large number) and for all $v \in V_1 \cup V_2 \cup \{1\}$, $B(v,\bot_F) = +\infty$. Finally, all remaining entries corresponding to nodes $1$ and $2$ are $0$.
\item Sequence $S$ of length $T=3m+1$ is generated by appending the observations $u$, $u$ and $\bot$ for all $u \in U$ and adding a $\bot_F$ observation at the end. 
\end{itemize}

\begin{figure}
	\centering
\begin{subfigure}[b]{.62\textwidth}
  \centering
\def\svgwidth{1\linewidth}
	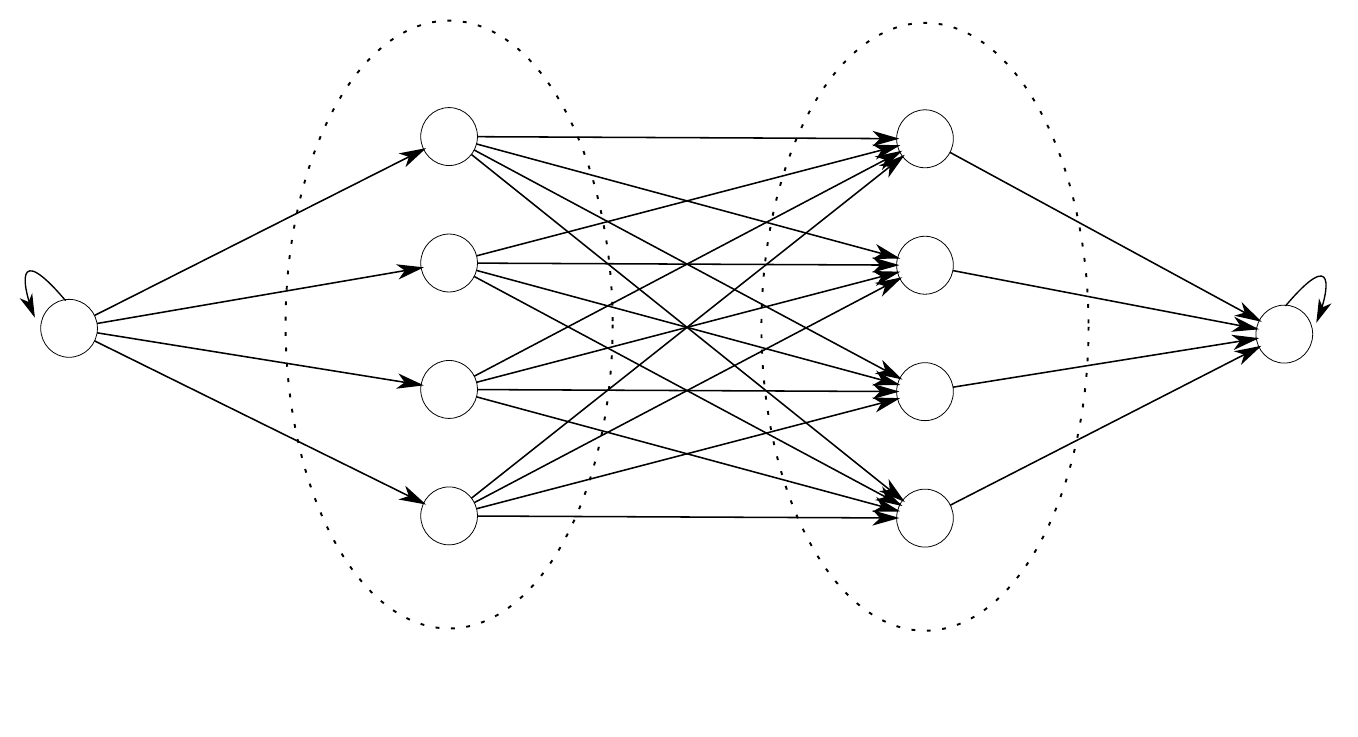
	\vspace{-22pt}
  \label{fig:sub1}
  \subcaption{The graph specified by transition matrix $A$. Every edge $(v_1,v_2)$ in $V_1 \times V_2$ has the original edge-weight as in graph $G$.}
\end{subfigure}%
	\quad
\begin{subfigure}[b]{.3\textwidth}
  \centering
\begin{tabular}{| c || c | c | c |}
\hline
Node & $u \in U$ & $\bot$ & $\bot_F$ \\
\hline
\hline
$1$ & $0$ & $0$ & $\infty$ \\
$v \in V_1 \cup V_2$ & $w_{v,u}$ & $\infty$ & $\infty$ \\
$2$ & $0$ & $0$ & $0$ \\
\hline
\end{tabular}
	\vspace{35pt}
  \subcaption{The cost of seeing a symbol at every node given by matrix $B$.}
  \label{fig:sub2}
\end{subfigure}
\caption{The construction of matrices $A$ and $B$ for the reduction in the proof of Theorem~\ref{thm:viterbiapsp}. The notation $w_{v,u}$ denotes the weight of the edge $(v,u)$ in the original graph $G$.}
\label{fig:viterbiapsp}
\end{figure}

\noindent Given the above construction, the theorem statement follows directly from the following claim.

\begin{claim}
The weight of the solution to the $\Viterbi$ instance is equal to the weight of the minimum triangle in the graph $G$.
\end{claim}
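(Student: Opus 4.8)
The plan is to show both inequalities: that any $k=3$-clique $(a, v_1, v_2)$ with $a \in U$, $v_1 \in V_1$, $v_2 \in V_2$ in $G$ corresponds to a valid state sequence in the HMM of the same cost, and conversely that the optimal Viterbi path must have a cost that is the weight of some triangle. The key structural observation is that $S$ is built from $m$ blocks, one per vertex $u \in U$, each block being the three observations $(u, u, \bot)$, followed by a final $\bot_F$. I would first analyze what a finite-cost state sequence must look like: since $B(v, \bot) = +\infty$ for $v \in V_1 \cup V_2$, at every $\bot$-observation step the path must be at node $1$ or node $2$; since $B(v, \bot_F) = +\infty$ for $v \in V_1 \cup V_2 \cup \{1\}$, at the final step the path must be at node $2$. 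Because node $1$ has only its self-loop and outgoing edges to $V_1$, and $V_2$ has only edges into node $2$, and node $2$ has only a self-loop, a finite-cost path must: sit at node $1$ (self-loop, cost $0$, emission cost $0$ on any $u$ or $\bot$) for some prefix of blocks, then at exactly one block $u^\star$ step $1 \to v_1 \to v_2 \to 2$ during the three observations $(u^\star, u^\star, \bot)$, then sit at node $2$ (self-loop, cost $0$) for the remaining suffix including the final $\bot_F$.

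Next I would compute the cost of such a canonical path. All transitions $1 \to 1$, $2 \to 2$, $1 \to v_1$, $v_2 \to 2$ have weight $0$ in $A$; the only nonzero transition is $A(v_1, v_2) = w_{v_1, v_2}$, the edge weight in $G$. All emissions at node $1$ and node $2$ cost $0$. The only nonzero emissions occur at the crossing block: the observation sequence there is $(u^\star, u^\star, \bot)$, visited by states $(v_1, v_2, 2)$ — wait, I need to be careful about the indexing of when the transition happens. Let me set it up so that during block $u^\star$ the states are: at the first $u^\star$ we are at $v_1$ (having just transitioned from $1$), at the second $u^\star$ we are at $v_2$, at the $\bot$ we are at $2$. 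Then the emission costs incurred are $B(v_1, u^\star) = w_{v_1, u^\star}$ and $B(v_2, u^\star) = w_{v_2, u^\star}$, while $B(2, \bot) = 0$. Hence the total cost of this path is exactly $w_{v_1, v_2} + w_{v_1, u^\star} + w_{v_2, u^\star}$, which is precisely the total edge-weight of the triangle $\{u^\star, v_1, v_2\}$ in $G$. One direction is then immediate: given any triangle in $G$ with vertices $u^\star, v_1, v_2$, the corresponding canonical path is a valid Viterbi path (all entries used are finite) of cost equal to the triangle weight, so $\mathrm{OPT}_{\Viterbi} \le w(\text{min triangle})$.

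For the converse, I would argue that every finite-cost state sequence is forced into the canonical form above, so its cost equals the weight of some triangle of $G$, giving $\mathrm{OPT}_{\Viterbi} \ge w(\text{min triangle})$; combined with the fact that a min-weight triangle exists and has finite weight (so $\mathrm{OPT}_{\Viterbi}$ is finite), the two inequalities yield equality. The one subtlety to nail down carefully — and the step I expect to be the main obstacle — is ruling out "degenerate" finite-cost paths: e.g., a path that never leaves node $1$ (impossible, since it must end at node $2$ by the $\bot_F$ constraint, and there is no edge from $1$ to $2$ except through $V_1, V_2$), or one that tries to cross $V_1 \to V_2$ more than once (impossible, since once at node $2$ there is no edge back out), or one that lingers in $V_1$ or $V_2$ across a $\bot$ step (impossible by the $+\infty$ emission cost), or one that transitions $1 \to v_1$ but then has no finite-weight edge $v_1 \to v_2$ for the particular $v_2$ reached — but since $G$ is a complete $3$-partite graph every such edge exists with finite weight, so the crossing always corresponds to a genuine triangle. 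I would also check the edge case of the alignment: the crossing must occupy exactly one $(u,u,\bot)$ block and cannot straddle two blocks, because the middle observation between two consecutive blocks is a $\bot$, forcing the path to be at $1$ or $2$ there, which breaks any attempted straddle. Once all these cases are dispatched, the claim follows, and Theorem~\ref{thm:viterbiapsp} follows from it together with the $\APSP$-equivalence of $\MinTriangle$ cited in Section~\ref{sec:hardness}.
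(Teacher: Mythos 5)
Your proposal is correct and follows essentially the same route as the paper's proof: the $+\infty$ entries for $\bot$ and $\bot_F$ force any finite-cost path to idle at node $1$, cross $1 \to v_1 \to v_2 \to 2$ during a single $(u,u,\bot)$ block, and idle at node $2$, so its cost is exactly $B(v_1,u)+A(v_1,v_2)+B(v_2,u)$, the weight of the triangle $(v_1,v_2,u)$. Your treatment is merely more explicit about the two inequalities and the degenerate/straddling cases, which the paper handles implicitly.
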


\begin{proof}
The optimal path for the $\Viterbi$ instance begins at node $1$. It must end in node $2$ since otherwise when observation $\bot_F$ arrives we collect cost $+\infty$. Similarly, whenever an observation $\bot$ arrives the path must be either on node $1$ or $2$. Thus, the path first loops in node $1$ and then goes from node $1$ to node $2$ during three consecutive observations $u$, $u$ and $\bot$ for some $u \in U$ and stays in node $2$ until the end. Let $v_1 \in V_1$ and $v_2 \in V_2$ be the two nodes visited when moving from node $1$ to node $2$. The only two steps of non-zero cost are:
\begin{enumerate}
\item Moving from node $1$ to node $v_1$ at the first observation $u$. This costs $A(1,v_1)+B(v_1,u) = B(v_1,u)$. 
\item Moving from node $v_1$ to node $v_2$ at the second observation $u$. This costs $A(v_1,v_2)+B(v_2,u)$. 
\end{enumerate}
Thus, the overall cost of the path is equal to $B(v_1,u)+A(v_1,v_2)+B(v_2,u)$, which is equal to the weight of the triangle $(v_1,v_2,u)$ in $G$. Minimizing the cost of the path in this instance is therefore the same as finding the minimum weight triangle in $G$.
\end{proof}

	\section{Hardness of $\Viterbi$ with small alphabet}

The proof of Theorem~\ref{thm:viterbiapsp} requires a large alphabet size, which can be as large as the number of total steps $T$.
In this section, we show how to get a lower bound for the $\Viterbi$ problem on alphabets of small size by using a different hardness assumption.

\smallalpha*

\paragraph{Reduction}
Throughout the proof, we set $p = \ceil{\frac C \eps}$ and $\alpha = \frac{C}{p} \le \eps$.
	
We will perform a reduction from the $\kClique$ problem for $k=p+2$ to the $\Viterbi$ problem. In the instance of the $\kClique$ problem, we are given a $k$-partite graph $G=(V_1 \cup V_2 \cup U_1 \ldots \cup U_p, \ E)$ such that $|V_{1}| = |V_{2}| = n$ and $|U_1| = \ldots = |U_p| = m = \Theta(n^\alpha)$. We want to find a clique of minimum weight in the graph $G$. Before describing our final $\Viterbi$ instance, we first define a weighted directed graph $G'=(\{1,2,3\} \cup V_1 \cup V_2, \ E')$ similar to the graph in the proof of Theorem~\ref{thm:viterbiapsp}. $E'$ contains all the edges of $G$ between $V_{1}$ and $V_{2}$, directed from $V_1$ towards $V_2$, edges from node $1$ towards all nodes in $V_1$ of weight $0$ and edges from all nodes in $V_2$ towards node $2$ of weight $0$. We also add a self-loop at nodes $1$ and $3$ of weight $0$ as well as an edge of weight 0 from node 2 towards node 3. We obtain the final graph $G''$ as follows:
\begin{itemize}
\item For every node $v \in V_1$, we replace the directed edge $(1,v)$ with a path $1 \to a_{v,1} \to...\to a_{v,p} \to v$ by adding $p$ intermediate nodes. All edges on the path have weight $0$.
\item For every node $v \in V_2$, we replace the directed edge $(v,2)$ with a path $v \to b_{v,1} \to...\to b_{v,p} \to 2$ by adding $p$ intermediate nodes. All edges on the path have weight $0$.
\item Finally, we replace the directed edge $(2,3)$ with a path $2 \to c_{1} \to...\to c_{Z} \to 3$ by adding $Z$ intermediate nodes, where $2^Z$ is a strict upper bound on the weight of any $k$-clique\footnote{A trivial such upper bound is $k^2$ times the weight of the maximum edge.}. All edges on the path have weight $0$.
\end{itemize}

We create an instance of the $\Viterbi$ problem $(A,B,S)$ as described below. Figure~\ref{fig:smallalpha} illustrates the construction of the instance.
\begin{itemize}
\item Matrix $A$ is the weighted adjacency matrix of $G''$ that takes value $+\infty$ (or a sufficiently large integer) for non-existent edges and non-existent self-loops. 
\item The alphabet of the HMM is $U_1 \cup ... \cup U_p \cup \{ \bot, \bot_0, \bot_1, \bot_F \}$ and thus matrix $B$ has $O(n)$ rows and $\sigma = p \cdot m + 4 = O(n^\alpha)$ columns. 

For all $v \in V_1$, every $i \in [p]$ and every $u \in U_i$, $B(a_{v,i},u)$ is equal to the weight of the edge $(v,u)$ in graph $G$. Similarly, for all $v \in V_2$, every $j \in [p]$ and every $u \in U_j$, $B(b_{v,j},u)$ is equal to the weight of the edge $(v,u)$ in graph $G$. 

Moreover, for all $i \in \{1,...,Z\}$, $B(c_{i},\bot_1) = 2^{i-1}$ and $B(c_{i},\bot_0) = 0$.
Finally, $B(v,\bot) = +\infty$ for all nodes $v \not \in \{1,3\}$ while $B(v,\bot_F) = +\infty$ for all nodes $v \neq 3$. 
All remaining entries of matrix $B$ are $0$.

\item Sequence $S$ is generated by appending for every tuple $(u_1,...,u_p) \in U_1 \times ... \times U_p$ the following observations in this order: Initially we add the observations ($u_1$, ..., $u_p$, $\bot_0$, $\bot_0$, $u_1$, ..., $u_p$, $\bot_0$). Moreover, let $W$ be the total weight of the clique ($u_1$, ..., $u_p$) in the graph $G$. We add $Z$ observations encoding $W$ in binary\footnote{Since $2^Z$ is a strict upper-bound on the clique size at most $Z$ digits are required.} starting with the least significant bit. For example, if $W=11$ and $Z=5$, the binary representation is $01011_2$ and the observations we add are $\bot_1, \bot_1, \bot_0, \bot_1, \bot_0$ in that order. Finally, we append a $\bot$ observation at the end.

Notice, that for each tuple, we append exactly $Z+2p+4 = Z+2k$ observations. Thus, the total number of observations is $m^p (Z+2k)$. We add a final $\bot_F$ observation at the end and set $T= m^p (Z+2k)+1$.
\end{itemize}

\begin{figure}
\centering
\def\svgwidth{0.8\linewidth}
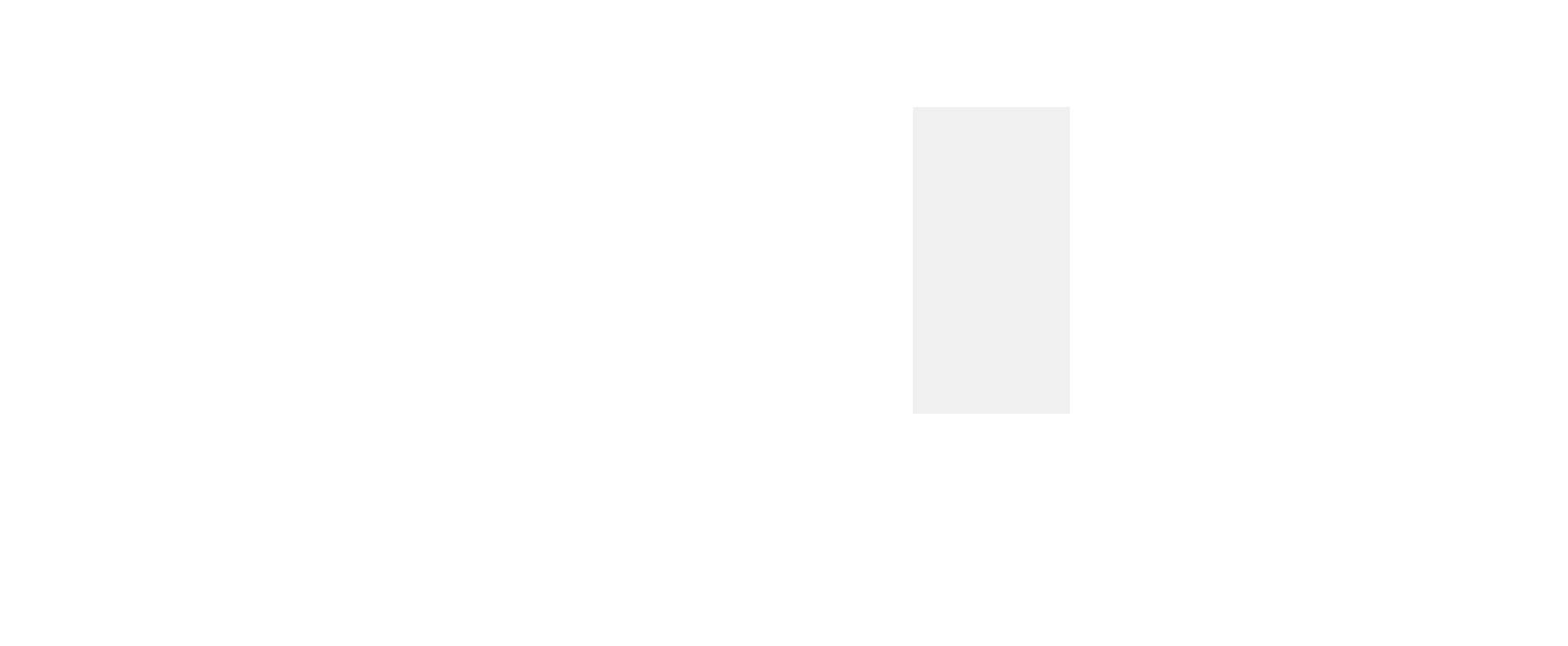
\vspace{-26pt}
\begin{flushleft}
\hspace{65pt}
\begin{tabular}{| c || c | c | c | c | c |}
\hline
Node & $u \in U_j$ & $\bot_0$ & $\bot_1$ & $\bot$ & $\bot_F$  \\
\hline \hline
$1$ & $0$ & $0$ & $0$ & $0$ & $\infty$  \\
$a_{v,i}$ for $v \in V_1$ & $w_{u,v} \mathbbm{1}_{i=j}$ & $0$ & $0$ & $\infty$ & $\infty$  \\
$v \in V_1 \cup V_2$ & $0$ & $0$ & $0$ & $\infty$ & $\infty$  \\
$b_{v,i}$ for $v \in V_2$ & $w_{u,v} \mathbbm{1}_{i=j}$ & $0$ & $0$ & $\infty$ & $\infty$  \\
$2$ & $0$ & $0$ & $0$ & $\infty$ & $\infty$  \\
$c_i$ & $0$  & $0$ & $2^{i-1}$ & $\infty$ & $\infty$ \\
$3$ & $0$ & $0$ & $0$ & $0$ & $0$ \\
\hline
\end{tabular}
\end{flushleft}

\caption{The construction of matrices $A$ and $B$ for the reduction in the proof of Theorem~\ref{thm:smallalpha}.}
\label{fig:smallalpha}
\end{figure}

\paragraph{Correctness of the reduction}
Since the $\kClique$ instance requires \\ $\Omega\left( |V_1| \cdot |V_2| \cdot \prod_{i=1}^p |U_i| \right)^{1-o(1)} = \Omega(Tn^2)^{1-o(1)}$ time, the following claim implies that the above $\Viterbi$ instances require $\Omega(Tn^2)^{1-o(1)}$ time. The alphabet size used is at most $O(n^\alpha)$ and $\alpha \le \eps$ and the theorem follows.

\begin{claim}
The weight of the solution to the $\Viterbi$ instance is equal to the minimum weight of a $k$-clique in the graph $G$.
\end{claim}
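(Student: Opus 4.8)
The plan is to set up a weight-preserving correspondence between the finite-cost paths of the $\Viterbi$ instance and the $k$-cliques of $G$, and then minimize on both sides. The first step is to pin down the shape of \emph{any} path $u_0=1,\dots,u_T$ of finite cost. In $G''$ the only self-loops are at nodes $1$ and $3$, no edge enters node $1$, and every other vertex ($a_{v,i}$, a vertex of $V_1\cup V_2$, $b_{v,i}$, node $2$, or $c_i$) has a unique out-edge. Hence the path sits at node $1$, departs at some time $t_0$ along $1\to a_{v_1,1}$, is thereafter forced along the chain $1\to a_{v_1,1}\to\cdots\to a_{v_1,p}\to v_1\to v_2\to b_{v_2,1}\to\cdots\to b_{v_2,p}\to 2\to c_1\to\cdots\to c_Z\to 3$, and then loops at node $3$; the only freedom is $t_0$ and the choices $v_1\in V_1$ and $v_2\in V_2$. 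It must depart eventually, since node $1$ costs $+\infty$ on the final symbol $\bot_F$ whereas node $3$ costs $0$ on it. This chain has $2p+Z+4$ edges and $2p+Z+3$ internal vertices, so $u_{t_0}=1$ forces $u_{t_0+2p+Z+4}=3$.

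The second step forces the departure to be aligned with the block of $S$ appended for a single tuple (I will call the $Z+2k$ consecutive symbols that $S$ appends for one tuple its \emph{block}). Every internal vertex of the chain costs $+\infty$ on both $\bot$ and $\bot_F$, so the $2p+Z+3$ symbols read at those vertices must all lie in $U_1\cup\cdots\cup U_p\cup\{\bot_0,\bot_1\}$. But each block is a run of exactly $2p+Z+3$ such symbols followed by a single $\bot$, and $S$ is these blocks concatenated with a trailing $\bot_F$; hence every maximal $\{\bot,\bot_F\}$-free run of $S$ has length exactly $2p+Z+3$, so a run of that length must coincide with one of them. Therefore $t_0$ is the start of the block of some tuple $\tau=(u_1,\dots,u_p)$, the block's closing $\bot$ is read at node $3$, and all symbols before $t_0$ (read at node $1$) and after the block (read at node $3$) cost $0$ since none equals $\bot_F$.

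The third step computes the cost of this forced path: the $i$-th internal vertex of the chain reads the $i$-th symbol of the block, so the only nonzero contributions are the weight of edge $\{v_1,u_i\}$ read at $a_{v_1,i}$ for each $i\in[p]$ (here the symbol is $u_i\in U_i$, which is exactly what the indicator $\mathbbm{1}_{i=j}$ in $B$ picks out), the weight of edge $\{v_1,v_2\}$ paid on the edge $v_1\to v_2$, the weight of edge $\{v_2,u_i\}$ read at $b_{v_2,i}$ for each $i\in[p]$ (the block lists $(u_1,\dots,u_p)$ a second time precisely for this), and the integer $W$ collected over $c_1,\dots,c_Z$ since $c_i$ reads the bit of $W$ of value $2^{i-1}$ and $B(c_i,\bot_1)=2^{i-1}$, $B(c_i,\bot_0)=0$ (valid as $W<2^Z$). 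As $W$ is the weight of the $p$-clique $(u_1,\dots,u_p)$ that the block encodes in binary, the total cost is the sum of the weights of all $\binom{k}{2}$ edges of the set $\{v_1,v_2,u_1,\dots,u_p\}$ — i.e., exactly the weight of this $k$-clique of $G$. Conversely, since $G$ is complete $k$-partite every $k$-clique picks one vertex per part, $v_1\in V_1$, $v_2\in V_2$, $u_i\in U_i$, and departing node $1$ at the start of the block of $(u_1,\dots,u_p)$ along the chain through $v_1$ and $v_2$, padded with self-loops at nodes $1$ and $3$, is a valid path of exactly that weight. Taking minima on both sides proves the claim.

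The step I expect to be the crux is the alignment argument of the second step: one must verify that the block length $Z+2k$ and the chain's internal length $2p+Z+3$ are chosen so that a $\{\bot,\bot_F\}$-free run of length $2p+Z+3$ can sit only exactly on a single tuple's block — neither straddling two blocks nor shifted within one — and that its closing $\bot$ falls precisely when the walk has already reached node $3$. The clean way to see this is the observation that all maximal $\{\bot,\bot_F\}$-free runs of $S$ share the same length, so any run of that length must be one of them; everything else is routine bookkeeping against the entries of $A$ and $B$ displayed in Figure~\ref{fig:smallalpha}.
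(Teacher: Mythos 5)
Your proof is correct and follows essentially the same route as the paper's: force any finite-cost path to idle at node $1$, traverse the chain through $a_{v_1,1},\dots,a_{v_1,p},v_1,v_2,b_{v_2,1},\dots,b_{v_2,p},2,c_1,\dots,c_Z$ into node $3$ aligned with a single tuple's block, and read off exactly the weight of the $k$-clique $(v_1,v_2,u_1,\dots,u_p)$; your maximal-run alignment argument merely makes explicit what the paper states informally. (One small slip: vertices of $V_1$ do not have a unique out-edge --- they have edges to all of $V_2$ since $G$ is complete $k$-partite --- but you already account for this freedom in the choice of $v_2$, so the argument stands.)
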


\begin{proof}
The optimal path for the $\Viterbi$ instance begins at node $1$. It must end in node $3$ since otherwise when observation $\bot_F$ arrives we collect cost $+\infty$. Similarly, whenever an observation $\bot$ arrives the path must be either on node $1$ or $3$. Thus, the path first loops in node $1$ and then goes from node $1$ to node $3$ during the sequence of $Z+2k$ consecutive observations corresponding to some tuple $(u_1,...,u_p) \in U_1 \times ... \times U_p$ and stays in node $3$ until the end. Let $v_1$ and $v_2$ be the nodes in $V_1$ and $V_2$, respectively, that are visited when moving from node $1$ to node $3$. The only steps of non-zero cost happen during the subsequence of observations corresponding to the tuple $(u_1,...,u_p)$:
\begin{enumerate}
\item When the subsequence begins with $u_1$, the path jumps to node $a_{v_1,1}$ which has a cost $B(a_{v_1,1},u_1)$ equal to the edge-weight $(v_1,u_1)$ in graph $G$. It then continues on to nodes $a_{v_1,2},...,a_{v_1,p}$ when seeing observations $u_2,...,u_p$. The total cost of these steps is $\sum_{i=1}^p B(a_{v_1,i},u_i)$ which is the total weight of edges $(v_1,u_1),...,(v_1,u_p)$ in graph $G$.
\item For the next two observations $\bot_0, \bot_0$, the path jumps to nodes $v_1$ and $v_2$. The first jump has no cost while the latter has cost $A(v_1,v_2)$ equal to the weight of the edge $(v_1,v_2)$ in $G$.
\item The subsequence continues with observations $u_1,...,u_p$ and the path jumps to nodes $b_{v_2,1},...,b_{v_2,p}$ which has a total cost $\sum_{i=1}^p B(b_{v_2,i},u_i)$ which is equal to the total weight of edges $(v_2,u_1),...,(v_2,u_p)$ in graph $G$.
\item The path then jumps to node $2$ at no cost at observation $\bot_0$.
\item The path then moves on to the nodes $c_1,...,c_Z$. The total cost of those moves is equal to the total weight of the clique ($u_1$, ..., $u_p$) since the observations $\bot_0$ and $\bot_1$ that follow encode that weight in binary.

\end{enumerate}
The overall cost of the path is exactly equal to the weight of the $k$-clique $(v_1,v_2,u_1,...,u_p)$ in $G$. Minimizing the cost of the path in this instance is therefore the same as finding the minimum weight $k$-clique in $G$.
\end{proof}

\section{A faster $\Viterbi$ algorithm}
In this section, we present a faster algorithm for the $\Viterbi$ problem, when there are only few distinct transition probabilities in the underlying HMM.

\algorithm*

The number of distinct transition probabilities is equal to the number of distinct entries in matrix $\tilde A$ in Definition~\ref{def:HMM}. The same is true for matrix $A$ in the additive version of $\Viterbi$, in Definition~\ref{def:viterbipath}. So, from the theorem statement we can assume that matrix $A$ has at most $2^{\eps \sqrt {\log n} }$ different entries for some constant $\eps > 0$.

To present our algorithm, we revisit the definition of $\Viterbi$. We want to compute a path $u_0=1,u_1,\dots,u_T$ that minimizes the quantity:
\begin{equation}\label{eq:problem}
\min_{u_0=1,u_1,\dots,u_T} \sum_{t=1}^T \left[  A(u_{t-1},u_{t}) +  B(u_{t},s_t) \right].
\end{equation}

Defining the vectors $b_t = B(\cdot,s_t)$, we note that \eqref{eq:problem} is equal to the minimum entry in the vector obtained by a sequence of $T$ $(\min,+)$ matrix-vector products\footnote{A $(\min,+)$ product between a matrix $M$ and a vector $v$ is denoted by $M \oplus v$ and is equal to a vector $u$ where $u_i = \min_j (M_{i,j} + v_j)$.} as follows:
\begin{equation}\label{eq:minplus}
A \oplus (\ldots(A \oplus ( A \oplus (A \oplus z + b_1) + b_2 )+b_3) \ldots ) + b_T
\end{equation}
where $z$ is a vector with entries $z_1 = 0$ and $z_i = \infty$ for all $i\neq 1$. Vector $z$ represents the cost of being at node $i$ at time $0$. Vector $(A \oplus z + b_1)$ represents the minimum cost of reaching each node at time $1$ after seeing observation $s_1$. After $T$ steps, every entry $i$ of vector~\eqref{eq:minplus} represents the minimum minimum cost of a path that starts at $u_0=1$ and ends at $u_T=i$ after $T$ observations. Taking the minimum of all entries gives the cost of the solution to the $\Viterbi$ instance.

To evaluate~\eqref{eq:minplus}, we design an online $(\min,+)$ matrix-vector multiplication algorithm. In the online matrix-vector multiplication problem, we are given a matrix and a sequence of vectors in online fashion. We are required to output the result of every matrix-vector product before receiving the next vector. Our algorithm for online $(\min,+)$ matrix-vector multiplication is based on a recent algorithm for online Boolean matrix-vector multiplication by Green Larsen and Williams~\cite{larsen2016faster}:

\begin{theorem}[Green Larsen and Williams~\cite{larsen2016faster}] \label{thm:OBMV} For any matrix $M \in \{0,1\}^{n\times n}$ and any sequence of $T=2^{\omega(\sqrt{\log n})}$ vectors $v_1,\ldots,v_T \in \{0,1\}^n$, online Boolean matrix-vector multiplication of $M$ and $v_i$ can be performed in $n^2/2^{\Omega(\sqrt{\log n})}$ amortized time whp. No preprocessing is required.\end{theorem}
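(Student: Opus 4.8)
The final statement is the Green Larsen--Williams bound for online Boolean matrix-vector multiplication, so the plan is to reconstruct their argument in three stages: a combinatorial reformulation, a lookup-table baseline with amortized table construction, and the hierarchical speedup that breaks the four-Russians barrier.

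First I would reformulate the product as a set-intersection problem. Writing each row $i$ of $M$ as the set $R_i = \{ j : M_{ij}=1\}$ and each query $v_t$ as $Q_t = \{ j : (v_t)_j = 1\}$, the $i$-th output bit of $M v_t$ is exactly the indicator of $R_i \cap Q_t \neq \emptyset$. Thus the task is online set-intersection-detection: a fixed family $R_1,\dots,R_n \subseteq [n]$ is queried by a stream of $T$ sets, and for each query we must report which of the $n$ sets it meets. This viewpoint isolates the difficulty: the only expensive operation is aggregating, per query, the contributions of all $n$ columns to all $n$ output bits.

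Second I would set up the block/table framework. Partitioning the columns into $n/s$ groups of size $s$, a query restricts to one of at most $2^s$ patterns per group, and for each group and pattern I store the $n$-bit ``hit vector'' marking the rows met within that group; a query is then answered by OR-ing the $n/s$ looked-up hit vectors. Building these tables lazily, only for patterns that actually occur, and keying them in a dictionary charges all table construction to first occurrences, so over a long stream the construction cost is negligible. This is what lets the theorem avoid a separate preprocessing phase, and it is precisely why the stream must satisfy $T = 2^{\omega(\sqrt{\log n})}$: the $2^{O(s)}$ table entries must be dwarfed by the number of queries. By itself, however, this baseline saves only a polylogarithmic factor, since each query still OR-s $n/s$ vectors of length $n$ at cost $n^2/(s w)$ word operations.

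The crux, and the step I expect to be the main obstacle, is upgrading the polylogarithmic savings to $2^{\Omega(\sqrt{\log n})}$. The baseline wall comes from recombining all $n/s$ blocks afresh on every query; to break it I would replace the single-level table by a hierarchical scheme of depth $\Theta((\log n)/s)$ in which each level aggregates partial results through memoized tables of size $2^{O(s)}$ that are \emph{shared across the entire stream}, so the expensive combination is paid once per distinct signature rather than once per query. The two opposing costs -- a signature/table budget that grows with $s$ and an overhead that grows with the number of levels $(\log n)/s$ -- balance at $s = \Theta(\sqrt{\log n})$, where both quantities are $\Theta(\sqrt{\log n})$, the budget $2^{O(\sqrt{\log n})}$ stays safely below $T = 2^{\omega(\sqrt{\log n})}$, and the amortized per-query cost becomes $n^2/2^{\Omega(\sqrt{\log n})}$. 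The genuinely delicate parts, which form the whole technical content of Green Larsen--Williams, are (i) designing the decomposition so that one memoized lookup resolves a $2^{\Omega(s)}$ fraction of the aggregation rather than a single output bit; (ii) proving that only $2^{O(\sqrt{\log n})}$ distinct signatures ever arise at each level, so the memoization tables stay small and their amortized construction cost vanishes over the $2^{\omega(\sqrt{\log n})}$ queries; and (iii) the randomization -- hashing for the dictionaries and random partitioning to keep the recursion balanced -- needed to make both the running time and the correctness hold with high probability.
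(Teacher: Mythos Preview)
The paper does not prove this theorem at all: it is quoted verbatim from Green Larsen and Williams~\cite{larsen2016faster} and used purely as a black box inside the proof of Theorem~\ref{thm:minplus}. There is therefore no ``paper's own proof'' to compare your proposal against; your attempt to reconstruct the argument is outside the scope of what this paper does.

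If you want feedback on the sketch itself: your high-level narrative (block decomposition, lazy table construction amortized over the stream, and balancing two opposing costs at block size $s=\Theta(\sqrt{\log n})$) is thematically in the right neighborhood, but it is not the actual Larsen--Williams argument. Their algorithm does not proceed via a hierarchical four-Russians recursion of depth $\Theta((\log n)/s)$ as you describe; the core idea is instead to partition the columns into $\sqrt{n}$ blocks and, for each block, maintain a list of rows not yet ``covered'' by previously answered queries, exploiting a combinatorial lemma that bounds the total work across all queries. The $2^{\Omega(\sqrt{\log n})}$ speedup comes from a different balancing than the one you set up, and in particular your step~(ii) --- bounding the number of distinct signatures at each level of a deep recursion --- does not correspond to anything in their proof. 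Your outline would at best recover the standard four-Russians $\mathrm{polylog}$ savings; the passage from there to $2^{\Omega(\sqrt{\log n})}$ that you label ``the crux'' is left as a wish rather than a mechanism.
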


We show the following theorem for online $(\min,+)$ matrix-vector multiplication, which gives the promised runtime for the $\Viterbi$ problem\footnote{Even though computing all $(\min,+)$ products does not directly give a path for the $\Viterbi$ problem, we can obtain one at no additional cost by storing back pointers. This is standard and we omit the details.} since we are interested in the case where $T$ and $n$ are polynomially related, i.e. $T = n^{\Theta(1)}$.

\begin{theorem} \label{thm:minplus} Let $A \in \R^{n\times n}$ be a matrix with at most $2^{\eps \sqrt {\log n} }$ distinct entries for a constant $\eps > 0$. For any sequence of $T=2^{\omega(\sqrt{\log n})}$ vectors $v_1,\ldots,v_T \in \R^n$, online $(\min,+)$ matrix-vector multiplication of $A$ and $v_i$ can be performed in $n^2/2^{\Omega(\sqrt{\log n})}$ amortized time whp. No preprocessing is required.\end{theorem}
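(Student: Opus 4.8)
The plan is to reduce online $(\min,+)$ matrix-vector multiplication with few distinct matrix values to online \emph{Boolean} matrix-vector multiplication, so that Theorem~\ref{thm:OBMV} can be applied. Let $a_1 < \dots < a_k$ be the distinct entries of $A$, so $k \le 2^{\eps\sqrt{\log n}}$, and for each $\ell\in[k]$ let $M^{(\ell)}\in\{0,1\}^{n\times n}$ be the matrix with $M^{(\ell)}_{ij}=1$ iff $A_{ij}=a_\ell$. Building all of these once costs $O(kn^2)$, which is subsumed by the amortized budget since $T=2^{\omega(\sqrt{\log n})}$. The identity driving the reduction is
\[
 (A\oplus v)_i \;=\; \min_{\ell\in[k]}\Bigl(a_\ell + \min\{\, v_j : M^{(\ell)}_{ij}=1 \,\}\Bigr),
\]
so it suffices to compute, for every $\ell$ and every row $i$, the quantity $m^{(\ell)}_i := \min\{v_j : M^{(\ell)}_{ij}=1\}$ (with $m^{(\ell)}_i=+\infty$ when row $i$ of $M^{(\ell)}$ is all zero); the final combination then costs only $O(nk)$ per query.

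To compute the $m^{(\ell)}_i$, I would first sort $v$, obtaining a permutation $\pi$ with $v_{\pi(1)}\le\dots\le v_{\pi(n)}$; then $m^{(\ell)}_i = v_{\pi(r)}$ where $r$ is the smallest index with $M^{(\ell)}_{i,\pi(r)}=1$, i.e.\ we must find the leftmost $1$ in each row of $M^{(\ell)}$ when its columns are read in the order $\pi$. I would do this in two stages. First, cut the sorted order into $g$ consecutive blocks of size $n/g$, and for $b=1,\dots,g$ let $Q_b\in\{0,1\}^n$ be the indicator of $\{\pi(1),\dots,\pi(bn/g)\}$. Running the online Boolean matrix-vector algorithm of Theorem~\ref{thm:OBMV} --- one independent instance per matrix $M^{(\ell)}$ --- on the products $M^{(\ell)}Q_1,\dots,M^{(\ell)}Q_g$ reveals, for each $i$, the block $b(i,\ell)$ containing the leftmost $1$ of row $i$, namely the first $b$ at which the $i$-th coordinate flips to $1$. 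Second, for each pair $(i,\ell)$ I would read row $i$ of $M^{(\ell)}$ restricted to the $n/g$ columns of block $b(i,\ell)$, take the first $1$ in sorted order, and set $m^{(\ell)}_i$ accordingly; this brute-force stage costs $O(n/g)$ per pair, hence $O(kn^2/g)$ per query.

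Putting it together, one $(\min,+)$ product costs $O(n\log n)$ for sorting, $kg$ Boolean products at amortized $n^2/2^{\Omega(\sqrt{\log n})}$ each (Theorem~\ref{thm:OBMV} applies because each $M^{(\ell)}$ receives $gT=2^{\omega(\sqrt{\log n})}$ queries over the run), $O(kn^2/g)$ for the brute-force stage, and $O(nk)$ to combine. I would then choose $g=\Theta(2^{\gamma\sqrt{\log n}/2})$, where $2^{\gamma\sqrt{\log n}}$ is the speedup factor in Theorem~\ref{thm:OBMV}; this balances the two dominant terms at $k\,n^2/2^{\gamma\sqrt{\log n}/2}=n^2/2^{(\gamma/2-\eps)\sqrt{\log n}}$, which is $n^2/2^{\Omega(\sqrt{\log n})}$ provided $\eps<\gamma/2$ --- exactly the (small, fixed) constant the theorem is free to choose. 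Correctness and the ``whp'' qualifier come from Theorem~\ref{thm:OBMV} after a union bound over the $k\le n$ instances, and the claimed $\Viterbi$ runtime follows by feeding the recursion~\eqref{eq:minplus} to this online subroutine with $T=n^{\Theta(1)}$.

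The part I expect to be delicate is the choice of $g$: the localization stage pushes $g$ down (fewer Boolean products) while the brute-force stage pushes $g$ up, and one must check that the geometric-mean choice still leaves a genuine $2^{\Omega(\sqrt{\log n})}$ factor \emph{after} paying the unavoidable $k$-fold overhead of the $k$ threshold matrices --- which is precisely what forces $k$ to be only $2^{\eps\sqrt{\log n}}$ with $\eps$ strictly below the constant hidden inside Theorem~\ref{thm:OBMV}.
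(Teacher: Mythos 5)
Your proposal is correct and follows essentially the same route as the paper's proof: decompose $A$ by its distinct values into Boolean indicator matrices, sort the query vector, use the Green Larsen--Williams online Boolean matrix-vector algorithm on indicator vectors of groups of sorted coordinates to locate, for each row, the block containing the minimizer, then finish by brute force inside that block, balancing the number of Boolean products against the block size (and requiring $\eps$ small relative to the constant in Theorem~\ref{thm:OBMV}, just as the paper implicitly does). The only differences are cosmetic: you query cumulative prefix indicators per value-class matrix and combine via $\min_\ell(a_\ell + m^{(\ell)}_i)$, whereas the paper reduces to $\{0,+\infty\}$ matrices and queries disjoint block indicators, filling each output entry at the first block whose Boolean product is $1$.
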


\begin{proof}
We will show the theorem for the case where $A \in \{0,+\infty\}^{n\times n}$. The general case where matrix $A$ has $d \le 2^{\eps \sqrt {\log n} }$ distinct values $a^1,...,a^d$  can be handled by creating $d$ matrices $A^{1},...,A^{d}$, where each matrix $A^{k}$ has entries $A^{k}_{ij} = 0$ if $A_{ij} = a^k$ and $+\infty$ otherwise. Then, vector $r = A \oplus v$ can be computed by computing $r^k = A^{k} \oplus v$ for every $k$ and setting $r_i = \min_k (r^k_i + a^k)$. This introduces a factor of $2^{\eps \sqrt {\log n} }$ in amortized runtime but the final amortized runtime remains $n^2/2^{\Omega(\sqrt{\log n})}$ if $\eps>0$ is sufficiently small. From now on we assume that $A \in \{0,+\infty\}^{n\times n}$ and define the matrix $\bar A \in \{0,1\}^{n \times n}$ whose every entry is $1$ if the corresponding entry at matrix $A$ is $0$ and $0$ otherwise.

For every query vector $v$, we perform the following:

\begin{itemize}
\item[--] Sort indices $i_1, ...,i_n$ such that $v_{i_1} \le ... \le v_{i_n}$ in $O(n \log n)$ time.
\item[--] Partition the indices into $p = 2^{\alpha \sqrt{\log n}}$ sets, where set $S_k$ contains indices $i_{(k-1) \ceil {\frac n p} + 1},...,i_{k \ceil {\frac n p}}$.
\item[--] Set $r = (\bot,...,\bot)^T$, where $\bot$ indicates an undefined value.
\item[--] For $k = 1 ... p$ fill the entries of $r$ as follows:
\begin{itemize}
	\item[-] Let $\mathbb{I}_{S_k}$ be the indicator vector of $S_k$ that takes value $1$ at index $i$ if $i \in S_k$ and $0$ otherwise.
	\item[-] Compute the Boolean matrix-vector product $\pi^k = \bar A \odot \mathbb{I}_{S_k}$ using the algorithm from Theorem~\ref{thm:OBMV}.
	\item[-] Set $r_j = \min_{i \in S_k} (A_{j,i} + v_i)$ for all $j \in [n]$ such that $r_j = \bot$ and $\pi^k_j = 1$. 
\end{itemize}
\item[--] Return vector $r$.
\end{itemize}

\paragraph{Runtime of the algorithm per query} The algorithm performs $p = 2^{\alpha \sqrt{\log n}}$ Boolean matrix-vector multiplications, for a total amortized cost of $p \cdot n^2 / {2^{\Omega(\sqrt{\log n})}} = {n^2}/ {2^{\Omega(\sqrt{\log n})}}$ for a small enough constant $\alpha > 0$. Moreover, to fill an entry $r_j$ the algorithm requires going through all elements in some set $S_k$ for a total runtime of $O(|S_k|) = {n}/ {2^{\Omega(\sqrt{\log n})}}$. Thus, for all entries $p_j$ the total time required is ${n^2}/ {2^{\Omega(\sqrt{\log n})}}$. The runtime of the other steps is dominated by these two operations so the algorithm takes ${n^2}/ {2^{\Omega(\sqrt{\log n})}}$ amortized time per query.

\paragraph{Correctness of the algorithm} To see that the algorithm correctly computes the $(\min,+)$ product $A \oplus v$, observe that the algorithm fills in the entries of vector $r$ from smallest to largest. Thus, when we set a value to entry $r_j$ we never have to change it again. Moreover, if the value $r_j$ gets filled at step $k$, it must be the case that $\pi^{k'}_j = 0$ for all $k' < k$. This means that for all indices $i \in S_1 \cup ... \cup S_{k-1}$ the corresponding entry $A_{j,i}$ was always $+\infty$.
\end{proof}

\section{Nondeterministic algorithms for the $\Viterbi$ problem}\label{sec:nondeterministic}
It is natural question to ask if one could give a conditional lower bound for the $\Viterbi$ problem by making a different hardness assumption such as the Strong Exponential Time Hypothesis (SETH)\footnote{A direct implication of SETH is that satisfiability of CNFs can't be decided in time $O(2^{n(1-\eps)})$ for any constant $\eps>0$.} or the $3$-Sum conjecture\footnote{The $3$-Sum conjecture states that given a set of $n$ integers, deciding if it contains three integers that sum to $0$ requires $\Omega(n^{2-o(1)})$ time.}. 
In this section we argue that there is no conditional lower bound for the $\Viterbi$ problem that is based on SETH if the Nondeterministic Strong Exponential Time Hypothesis (NSETH)\footnote{The NSETH is an extension of SETH and it states that deciding the language of unsatisfiable CNFs requires $\Omega(2^{n(1-\eps)})$ \emph{nondeterministic} time.} is true. NSETH was introduced in \cite{carmosino2016nondeterministic} and the authors showed the following statement. If a certain computational problem is in ${\sf NTIME}(n^{c-\alpha}) \cap {\sf coNTIME}(n^{c-\alpha})$ for a constant $\alpha>0$, then, if NSETH holds, there is no $\Omega(n^{c-o(1)})$ conditional lower bound for this problem based on SETH under deterministic reductions. Thus, to rule out $\Omega(Tn^2)^{1-o(1)}$ SETH-based lower bound for $\Viterbi$ under deterministic reductions, it suffices to show that $\Viterbi \in {\sf NTIME}\left((Tn^2)^{1-\alpha}\right) \cap {\sf coNTIME}\left((Tn^2)^{1-\alpha}\right)$ for a constant $\alpha>0$. This is what we do in the rest of the section.

To show that $\Viterbi \in {\sf NTIME}\left((Tn^2)^{1-\alpha}\right) \cap {\sf coNTIME}\left((Tn^2)^{1-\alpha}\right)$, we have to define a decision version of the $\Viterbi$ problem. A natural candidate is as follows: given the instance and a threshold, decide if the cost of the solution is at most the threshold. Clearly, $\Viterbi \in {\sf NTIME}\left(T\right) \subseteq {\sf NTIME}\left((Tn^2)^{1-\alpha}\right)$ since the nondeterministic algorithm can guess the optimal path and check that the solution is at most the threshold.

We now present an algorithm which shows that $\Viterbi \in {\sf coNTIME}\left((Tn^2)^{1-\alpha}\right)$. Consider the formula~\eqref{eq:minplus}. We can rewrite it by defining a sequence of vectors $v_0,v_1,...,v_T$ such that:
$$v_0 = z \text{ and } v_t = A \oplus v_{t-1} + b_t$$
Thus, we can solve the $\Viterbi$ instance by computing $v_T$ and calculating its minimum entry. Notice that the above can be rewritten in a matrix form as:
\begin{equation}\label{eq:minplusmatrix}
  [v_1,...,v_T] = A \oplus [v_0,...,v_{T-1}] + [b_1, ..., b_T]
\end{equation}
Using this equation, the nondeterministic algorithm can check that the solution to $\Viterbi$ is greater than the threshold by directly guessing all vectors $v_1,...,v_T$ and computing the minimum entry of $v_T$. The algorithm can verify that these guesses are correct by checking that equation~\eqref{eq:minplusmatrix} holds. The most computationally demanding step is verifying that the $(\min,+)$-product of matrices $A$ and $[v_0,...,v_{T-1}]$ is accurate but it is known that $\textsc{MinPlusProduct} \in {\sf NTIME}\left((Tn^2)^{1-\alpha}\right) \cap {\sf coNTIME}\left((Tn^2)^{1-\alpha}\right)$ for some constant $\alpha > 0$ \cite{carmosino2016nondeterministic}. Therefore, it follows that also $\Viterbi \in {\sf NTIME}\left((Tn^2)^{1-\alpha}\right) \cap {\sf coNTIME}\left((Tn^2)^{1-\alpha}\right)$.
	\section{Complexity of $\Viterbi$ for unary alphabet}\label{sec:noobs}

In this section, we focus on the extreme case of $\Viterbi$ with unary alphabet.

\begin{theorem}
	The $\Viterbi$ problem requires $\Omega(Tn^2)^{1-o(1)}$ time when $T\leq n$ even if the size of the alphabet is $\sigma=1$, assuming the APSP Conjecture.
\end{theorem}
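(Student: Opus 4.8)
The plan is to exploit the fact that the special case $\sigma=1$, $B\equiv 0$ of \Viterbi{} is exactly the \Walk{} problem, and to show that \Walk{} on $n$ vertices with $T=\Theta(n)$ steps already requires $\Omega(Tn^2)^{1-o(1)}$ time. Just as in the proof of Theorem~\ref{thm:viterbiapsp} I would reduce from \MinTriangle{}, i.e.\ the $3$-partite \kClique{} problem (Definition~\ref{kpartite}) for $k=3$, taking a complete tripartite graph $G=(V_1\cup V_2\cup V_3,E)$ with $|V_1|=|V_2|=|V_3|=m$ and positive integer edge weights bounded by $m^{O(1)}$; by the APSP Conjecture this instance requires $\Omega(m^3)^{1-o(1)}$ time. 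The obstacle peculiar to $\sigma=1$ is that there is no stream of observations to ``synchronise'' the walk the way the repeated symbols do in Theorem~\ref{thm:viterbiapsp}: the walk must be forced to close the triangle through the \emph{same} $V_1$-vertex it opened it with, using only the step counter and the edge weights.

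Concretely, I would build a directed graph $G''$ on $\Theta(m)$ vertices out of four pieces: a \emph{selection chain} $\ell_0\to\ell_1\to\dots\to\ell_m$ of weight-$0$ edges, where $\ell_a$ (for $a\in[m]$) is identified with the $a$-th vertex $v_a^1$ of $V_1$ and $\ell_0$ is the start vertex; the vertex sets $V_2$ and $V_3$; and a \emph{closing chain} $r_1\to r_2\to\dots\to r_m\to e$ of weight-$0$ edges ending at a sink $e$. The remaining edges are $\ell_a\to v^2$ of weight $w(v_a^1,v^2)+P_a$ for every $v^2\in V_2$, then $v^2\to v^3$ of weight $w(v^2,v^3)$ for every $v^2\in V_2,\ v^3\in V_3$, and $v^3\to r_{a'}$ of weight $w(v^3,v_{a'}^1)-P_{a'}+M$ for every $v^3\in V_3,\ a'\in[m]$; all other entries of the transition matrix are $+\infty$, and there are \emph{no} self-loops anywhere (in particular none at $\ell_0$ or at $e$). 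Here $P_a:=a\cdot W$, where $W$ is a strict upper bound on the weight of any triangle of $G$ (e.g.\ $W=3\cdot(\max\text{ edge weight})+1=m^{O(1)}$), and $M:=mW$ is a global shift keeping every weight non-negative and $O(\log m)$-bit; the construction takes $O(m^2)$ time. I would set $T:=m+4$, which is exactly the length of the ``honest'' walk $\ell_0\to\dots\to\ell_a\to v^2\to v^3\to r_a\to\dots\to r_m\to e$. I expect the real work to be in the next step: checking that this rigid gadget genuinely rules out dishonest walks.

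The key claim to establish is that the minimum weight of a length-$T$ walk from $\ell_0$ in $G''$ equals $M$ plus the minimum triangle weight of $G$; given this, solving \Walk{} on $G''$ recovers the \MinTriangle{} answer (and a minimum triangle, by reading the returned path). To prove the claim I would first observe that the gadget is rigid: since there are no self-loops and no edges leading back toward the selection chain, any finite-weight walk from $\ell_0$ must climb the selection chain, branch \emph{exactly once} into $V_2$ at some $\ell_a$, step to $V_3$, enter the closing chain at some $r_{a'}$, and thereafter only advance along that chain, ending either inside $\{r_{a'},\dots,r_m\}$ or at the dead end $e$. A walk reaching $e$ has length $a+(m-a')+4$, so $T=m+4$ forces $a'=a$; then $+P_a$ and $-P_{a'}$ cancel and its weight is exactly $w(v_a^1,v^2)+w(v^2,v^3)+w(v^3,v_a^1)+M$, the weight of the triangle $(v_a^1,v^2,v^3)$ plus $M$, so the best such walk has weight $M$ plus the minimum triangle weight. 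A walk stopping inside the closing chain has length $a+3+j$ with $0\le j\le m-a'$, so $T=m+4$ forces $j=m+1-a$ and hence $a'<a$; its weight is $w(v_a^1,v^2)+w(v^2,v^3)+w(v^3,v_{a'}^1)+M+(P_a-P_{a'})\ge M+(a-a')W\ge M+W$, which strictly exceeds $M$ plus any triangle weight by the choice of $W$. (No length-$T$ walk stops before reaching the closing chain, since the selection chain plus three more steps is shorter than $T$.) Hence honest walks are optimal, proving the claim. Choosing $m=\Theta(n)$ gives a \Walk{} instance on $\Theta(n)$ vertices with $T=m+4=\Theta(n)\le n$, so a $(Tn^2)^{1-o(1)}$-time algorithm for \Walk{}, hence for \Viterbi{} with $\sigma=1$, would yield an $O(m^3)^{1-o(1)}$-time algorithm for \MinTriangle{}, contradicting the APSP Conjecture; the same gadget with $\Theta(T)$ selection and closing vertices and $\Theta(n)$ vertices in $V_2,V_3$ covers every $T$ up to a constant fraction of $n$, which is what the statement requires.
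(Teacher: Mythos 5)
Your construction is correct and is essentially the paper's reduction: the paper likewise reduces from an unbalanced tripartite \MinTriangle{} instance, threads the small part through two weight-$0$ chains (a path $P$ through $U$ and a copy $P'$ through $U'$ ending at a terminal node), branches through the two size-$n$ parts in between, sets $T=m+4$, and uses the step counter to force the walk to re-enter the second chain at the matching position. The only difference is bookkeeping: the paper forces the walk to reach the terminal node exactly at time $T$ (via a large reward on the final edge, equivalently a global shift), so the matching $a'=a$ follows purely from timing, whereas you additionally allow mismatched walks ending inside the closing chain and kill them with the potentials $P_a=aW$ --- a minor variation of the same argument.
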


\noindent The above theorem follows from APSP-hardness of the $\Walk$ problem that we present next.

\begin{theorem}
	The $\Walk$ problem requires $\Omega(Tn^2)^{1-o(1)}$ time when $T\leq n$, assuming the APSP Conjecture.
\end{theorem}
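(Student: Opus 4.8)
The plan is to reduce from $\MinTriangle$, which under the APSP conjecture is $\Omega(n^3)^{1-o(1)}$ hard; more precisely I would use the $3$-partite version of $\kClique$ with $k=3$, whose hardness for unbalanced part sizes follows from the $k$-partite clique lemma in the preliminaries together with the equivalence of the $3$-Clique and APSP conjectures. Fix a constant $\beta\in(0,1]$ and take an instance on a complete tripartite graph $G=(V_1\cup V_2\cup V_3,E)$ with positive integer edge weights $w(\cdot,\cdot)$, where $|V_1|=|V_2|=N$ and $|V_3|=M=\Theta(N^\beta)$, identifying $V_3$ with $\{1,\dots,M\}$. By the $k$-partite lemma this requires $\Omega(N^2M)^{1-o(1)}$ time.

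Next I would build the $\Walk$ instance. Its vertex set is $\{\ell_1,\dots,\ell_M\}\cup\{r_1,\dots,r_M\}\cup V_1\cup V_2$, with $\ell_1$ designated as the start vertex $1$, so $n=2M+2N=\Theta(N)$. All entries of $A$ are $+\infty$ (in particular there are no self-loops) except: a ``left spine'' $\ell_c\to\ell_{c+1}$ and a ``right spine'' $r_c\to r_{c+1}$ of weight $0$ for $c=1,\dots,M-1$; and, for each slot $c\in\{1,\dots,M\}$, a ``crossing'' consisting of edges $\ell_c\to a$ of weight $w(c,a)$ for $a\in V_1$, edges $a\to b$ of weight $w(a,b)$ for $a\in V_1,\,b\in V_2$ (these are shared across slots, which is what keeps $n=\Theta(N)$), and edges $b\to r_c$ of weight $w(b,c)$ for $b\in V_2$. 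Finally set $T=M+2$, so that $T\le n$.

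The crux is the claim that the minimum weight of a length-$T$ walk from $\ell_1$ equals the minimum triangle weight of $G$. Since the left spine dead-ends at $\ell_M$, any valid length-$T$ walk must eventually leave some $\ell_c$ via a crossing $\ell_c\to a\to b\to r_j$, after which the only moves available are $r_j\to r_{j+1}\to\cdots\to r_M$, ending at the dead-end $r_M$; such a walk uses exactly $(c-1)+3+(M-j)$ edges, and because the graph has no self-loops it cannot pad, so it has length $T=M+2$ precisely when $j=c$. Hence every valid length-$T$ walk incurs cost exactly $w(c,a)+w(a,b)+w(b,c)$, the weight of triangle $(a,b,c)$, and conversely each triangle yields such a walk; minimizing gives the claim. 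Therefore the $\Walk$ instance requires $\Omega(N^2M)^{1-o(1)}=\Omega(Tn^2)^{1-o(1)}$ time, and letting $\beta\in(0,1]$ vary (so that $T=\Theta(n^\beta)$ ranges over the whole regime $T\le n$) proves the theorem; the unary statement for $\Viterbi$ then follows since $\Walk$ is the $\sigma=1$ case with $B(\cdot,1)\equiv 0$.

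The step I expect to require the most care is ruling out ``cheating'' walks in the correctness claim. A walk cannot cross more than once because there are no edges from $V_2$ or the right spine back to the left side; it cannot waste time because there are no self-loops anywhere (this is exactly why the construction is DAG-like and why $T$ is pinned at $M+2$ instead of being allowed a slack range); and it cannot cross at the wrong slot because, as the length count shows, crossing from $\ell_c$ to $r_j$ with $j>c$ reaches the dead-end strictly before step $T$ while $j<c$ would need strictly more than $T$ steps. One should also verify the routine points that the reduction runs in time $\widetilde O(N^2)$ (all weights are copied directly from $G$ and $n=\Theta(N)$) and that identifying slots with vertices of $V_3$ while keeping $M\le N$ is what makes $Tn^2=\Theta(N^2M)$ match the source hardness exactly.
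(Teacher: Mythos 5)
Your construction is essentially the paper's (two weight-$0$ spines indexed by the small part $V_3$, a three-edge crossing through $V_1$ and $V_2$ carrying the triangle weights, and $T$ chosen so the step count ties the crossing slot to the landing slot), but it is missing the one ingredient that makes the correctness claim true, and as written the claim is false. In the $\Walk$ problem the endpoint of the walk is unconstrained: a walk only has to make $T$ finite-weight steps, it does not have to end at the dead-end $r_M$. So consider a walk that follows the left spine to $\ell_c$, crosses $\ell_c\to a\to b\to r_j$ with $j<c$, and then takes the remaining $M-c$ steps along the right spine; it ends at $r_{j+M-c}$ with $j+M-c<M$, uses only finite-weight edges, has length exactly $T=M+2$, and costs $w(c,a)+w(a,b)+w(b,j)$ with $j\neq c$ --- which is not a triangle weight and can be strictly smaller than every triangle (e.g.\ $M=2$ with $w(2,a),w(a,b),w(b,1)$ tiny and $w(1,a),w(b,2)$ huge). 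The same problem arises for a walk that traverses the whole left spine and crosses from $\ell_M$ to an arbitrary $r_j$ exactly at step $T$. Your stated justification, ``$j<c$ would need strictly more than $T$ steps,'' silently assumes the walk must reach $r_M$; nothing in the construction forces that, so the value of your $\Walk$ instance is $\min_{j\le c}\bigl[w(c,a)+w(a,b)+w(b,j)\bigr]$ rather than the minimum triangle weight, and the reduction breaks.

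The paper closes exactly this hole: after the right spine it appends a terminal node $2$ reachable only by the final spine edge, and gives that edge a large negative weight $-C$ (equivalently, weight $0$ there and $+C$ added to all other edges). This makes it strictly suboptimal not to reach node $2$, and since node $2$ has no outgoing edges the walk must arrive there at exactly step $T$, which is what pins the landing slot to equal the crossing slot and forces the cost to be a genuine triangle weight. If you add this forcing gadget (and re-set $T$ accordingly), the rest of your argument --- the unbalanced $3$-partite hardness via the $k$-partite lemma, the parameter matching $\Omega(N^2M)=\Omega(Tn^2)$, and varying $\beta$ to cover the regime $T\le n$ --- goes through and coincides with the paper's proof.
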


\begin{proof}

We will perform a reduction from the $\MinTriangle$ problem to the $\Viterbi$ problem. In the instance of the $\MinTriangle$ problem, we are given a $3$-partite undirected graph $G=(V_1 \cup V_2 \cup U, \ E)$ with positive edge weights such that $|V_1| = |V_2| = n$, $|U| = m$. We want to find a triangle of minimum weight in the graph $G$. 
To perform the reduction, we define a weighted directed and acyclic graph $G'=(\{1,2\} \cup V_1 \cup V_2 \cup U \cup U', \ E')$. Nodes in $U'$ are in one-to-one correspondence with nodes in $U$ and $|U'|=m$. $E'$ is defined as follows. We add all edges of $G$ between nodes in $U$ and $V_1$ directed from $U$ towards $V_1$ and similarly, we add all edges of $G$ between nodes in $V_1$ and $V_2$ directed from $V_1$ towards $V_2$. Instead of having edges between nodes in $V_2$ and $U$, we add the corresponding edges of $G$ between nodes in $V_2$ and $U'$ directed from $V_2$ towards $U'$.
Moreover, we add additional edges of weight $0$ to create a path $P$ of $m+1$ nodes, starting from node $1$ and going through all nodes in $U$ in some order. Finally, we create another path $P'$ of $m+1$ nodes going through all nodes in $U'$ in the same order as their counterparts on path $P$ and ending at node $2$. These edges have weight $0$ apart from the last one, entering node $2$, which has weight $-C$ (a sufficiently large negative constant)\footnote{Since the definition of $\Walk$ doesn't allow negative weights, we can equivalently set its weight to be 0 and add $C$ to all the other edge weights.}.

We create an instance of the $\Walk$ problem by setting $T=m+4$ and $A$ to be the weighted adjacency matrix of $G'$ that takes value $+\infty$ (or a sufficiently large integer) for non-existent edges and self-loops.

The optimal walk of the $\Walk$ instance must include the edge of weight $-C$ entering node $2$ since otherwise the cost will be non-negative. Moreover, the walk must reach node $2$ exactly at the last step since otherwise the cost will be $+\infty$ as there are no outgoing edges from node $2$. By the choice of $T$, the walk leaves path $P$ at some node $u \in U$, then visits nodes $v_1$ and $v_2$ in $V_1$ and $V_2$, respectively, and subsequently moves to node $u' \in U'$ where $u'$ is the counterpart of $u$ on path $P'$. The total cost of the walk is thus the weight of the triangle $(u,v_1,v_2)$ in $G$, minus $C$. Therefore, the optimal walk has cost equal to the weight of the minimum triangle up to the additive constant $C$.
\end{proof}

Notice that when $T > n$, the runtime of the Viterbi algorithm is no longer optimal. Equation~\ref{eq:minplus} for the general $\Viterbi$ problem reduces, in the case of unary alphabet, to computing $(\min,+)$ matrix-vector product $T$ times: $A \oplus A \oplus ... \oplus A \oplus z$. However, this can equivalently be performed by computing $A^{\oplus T}$ using exponentiation with repeated squaring. This requires only $O(\log T)$ matrix $(\min,+)$-multiplications. Using the currently best algorithm for $(\min,+)$ matrix product~\cite{williams2014faster}, we get an algorithm with a total running time $\log T \cdot n^3 / 2^{\Omega(\sqrt{\log n})}$.
	
	\section*{Acknowledgments}
We thank Piotr Indyk for many helpful discussions and for comments on an earlier version of the writeup. This work was supported in part by an IBM PhD Fellowship, the NSF and the Simons Foundation.

	\bibliographystyle{alpha}
	\bibliography{ref}
	\appendix
	
	\section{Sum of Probabilities} \label{app:sumprob}

In the definition of the additive version of $\Viterbi$, we didn't impose any constraint on the weights. 
In the multiplicative version where weights correspond to probabilities, we have the restriction that probabilities of transition from each vertex sum to 1.

To convert an instance $\mathcal{I}_{Add}$ of the additive $\Viterbi$ formulation to an equivalent instance $\mathcal{I}_{Mul}$ in the multiplicative setting, we add a shift of $\log n$ to all entries of $A$ and a shift of $\log T$ to entries of matrix $B$. This doesn't change the optimal solution but only changes its value by an additive shift of $T \log n + T \log T$. This transformation makes all probabilities in the $\mathcal{I}_{Mul}$ instance small enough such that transition probabilities sum to less than 1 and similarly probabilities of outputting observations sum to less than 1. To handle the remaining probability, we introduce an additional node $\alpha$ and an additional symbol $\gamma$ in the alphabet of observations. Every original transitions to node $\alpha$ with its remaining transition probability and outputs observation $\gamma$ with its remaining transition probability. We require that node $\alpha$ outputs observation $\gamma$ with 100\% probability. As we never observe $\gamma$ in the sequence of observations, the optimal solution must never go through node $\alpha$ and thus the optimal solution remains the same.\\

The transformation above requires introducing an additional symbol in the alphabet. For our reduction of $\APSP$ to $\Viterbi$ when $\sigma = 1$, we don't want the alphabet to increase. We describe an alternative transformation for $\sigma = 1$ that doesn't introduce additional symbols. This case corresponds to the $\Walk$ instance and matrix $B$ is irrelevant.

We first scale all weights in matrices $A$, by dividing by some large weight $W$, so that all values are between $0$ and $1$ and then add a shift of $\log n$ to all of them. This doesn't change the optimal solution but only changes its value by a multiplicative factor $W$ and an additive shift of $T \log n$.
After this transformation all values are between $\log n$ and $1+\log n$ and thus the corresponding probabilities in the $\mathcal{I}_{Mul}$ instance are at most $1/n$. This causes transition probabilities to sum to less than 1. To assign the remaining probability, we introduce a clique of $4 n$ additional nodes. All nodes in the clique have probability $\frac 1 {4 n}$ of transition to any other node in the clique and $0$ probability of transition to any of the original node. For every original node, we spread its remaining transition probability evenly to the $4 n$ nodes of the clique. It is easy to see that the optimal solution to the $\Viterbi$ problem will not change, as it is never optimal to visit any of the nodes in the clique. This is because all edges in the original graph have weight at most $1 + \log n$ while if a node in the clique is visited the path must stay in the clique at a cost of $2+\log n$ per edge.

\section{Reduction from $\kClique$ to $\kClique$ in $k$-partite graphs}

In this section, we show the following lemma using standard techniques.
\begin{lemma}
Consider the $\kClique$ problem in $k$-partite graphs $G=(V_1 \cup \ldots \cup V_k, \ E)$ with $|V_i|=n_i$.
If for all $i,j$ we have that $n_i = n_j^{\Theta(1)}$, then the $\kClique$ problem for this class of instances requires $\Omega\left(\prod_{i=1}^k n_i \right)^{1-o(1)}$ time assuming the $k$-Clique conjecture.
\end{lemma}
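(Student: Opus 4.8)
The plan is to prove the lemma by composing two standard reductions, starting from the (non-partite) \kClique{} problem, which the $k$-Clique conjecture asserts requires $\Omega(N^k)^{1-o(1)}$ time on $N$-vertex complete edge-weighted graphs. Throughout, set $N := \max_i n_i$; the hypothesis $n_i = n_j^{\Theta(1)}$ guarantees $N \le \prod_{i=1}^k n_i \le N^k$, so $N$, every $n_i$, and $\prod_i n_i$ are pairwise polynomially related and the $o(1)$ terms below are mutually consistent.

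\emph{Step 1 (non-partite $\to$ balanced $k$-partite).} I would first recall the folklore reduction showing that the $k$-Clique conjecture already implies the $\Omega(N^k)^{1-o(1)}$ lower bound for \emph{balanced} $k$-partite instances, i.e.\ with all parts of size $N$. Given a complete edge-weighted graph on $N$ vertices, form a $k$-partite graph whose parts $V_1,\dots,V_k$ are disjoint copies of the vertex set; for $x\in V_i$, $y\in V_j$ with $i\neq j$, give the edge $(x,y)$ the original weight of the corresponding pair of vertices if those vertices are distinct, and a large value $+\infty$ (say $k^2$ times the maximum weight, still within the $N^{O(k)}$ weight bound) if $x$ and $y$ are copies of the same vertex. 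Every finite-weight $k$-clique then uses $k$ distinct original vertices, one per part, and hence corresponds to a $k$-clique of the input of equal weight, and vice versa; thus a $T(N)$-time algorithm for balanced $k$-partite instances gives an $O(T(N)+k^2N^2)$ algorithm for \kClique.

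\emph{Step 2 (balanced $\to$ the given aspect ratio, and conclusion).} Given a balanced $k$-partite instance with parts of size $N$, I partition each $V_i$ into $\lceil N/n_i\rceil$ blocks of size at most $n_i$, padding the last block of each part with dummy vertices incident only to $+\infty$-weight edges so that every block has size exactly $n_i$ (note $N\ge n_i$, so $\lceil N/n_i\rceil\le 2N/n_i$). The minimum-weight $k$-clique of the balanced instance equals the minimum, over the $\prod_i\lceil N/n_i\rceil$ ways of choosing one block per part, of the minimum-weight $k$-clique of the induced sub-instance, and each such sub-instance is an instance of the family with part sizes exactly $n_1,\dots,n_k$ (dummies never lie on a finite-weight clique, so this is correct). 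Consequently, an algorithm solving the family in time $(\prod_i n_i)^{1-\delta}$ for a constant $\delta>0$ would solve the balanced-$N$ instance, hence \kClique{} on $N$ vertices, in time
\[
2^k\cdot\frac{N^k}{\prod_i n_i}\cdot\Big(\prod_i n_i\Big)^{1-\delta}
= 2^k\,N^k\Big(\prod_i n_i\Big)^{-\delta}
\le 2^k\,N^{k-\delta},
\]
using $\prod_i n_i\ge N$. This is $N^{k-\Omega(1)}$, contradicting the $k$-Clique conjecture; so no such $\delta$ exists, and the family requires $\Omega(\prod_i n_i)^{1-o(1)}$ time.

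\emph{Main obstacle.} Both reductions are elementary, so the only point needing care is the parameter accounting in the final estimate: one must check that the $\prod_i\lceil N/n_i\rceil$ blow-up from the blocking step is entirely absorbed, so that feeding the hypothetical algorithm sub-instances of size $\prod_i n_i$ still leaves a genuinely subpolynomial-in-$N^k$ running time. This is exactly where $n_i=n_j^{\Theta(1)}$ enters — via $N\le\prod_i n_i\le N^k$ — making the $(\prod_i n_i)^{-\delta}$ savings at least $N^{-\Omega(1)}$ and keeping the $o(1)$'s consistent across the composition.
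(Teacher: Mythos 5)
Your proposal is correct and is essentially the paper's own argument: both self-reduce from (non-partite) \kClique{} on $N=\max_i n_i$ vertices by splitting the vertex set into blocks of size $n_i$ for each part, enumerating the roughly $N^k/\prod_i n_i$ block combinations, calling the hypothesized fast $k$-partite algorithm on each, and taking the minimum, with the same $N^k(\prod_i n_i)^{-\delta}\le N^{k-\Omega(1)}$ accounting. Your factoring through a balanced $k$-partite intermediate (with copies and padding dummies) is only a presentational difference from the paper's direct construction.
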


\begin{proof}
Without loss of generality assume that $n_1 \ge n_i$ for all $i$ and let $n = n_1$.
Assume, that there is an $O\left( \prod_i n_i \right)^{1-\eps}$ algorithm that finds a minimum weight $k$-clique in $k$-partite graphs with $|V_i|=n_i$ for all $i$. We can use this faster algorithm to find a $k$-clique in a graph $G=(V, \ E)$ where $|V|=n$, as follows: Let $\mathcal{V}^i$ be a partition of $V$ into $\frac n {n_i}$ sets of size $n_i$. For all $(V_1,\ldots,V_k) \in \mathcal{V}^1\times\cdots\times \mathcal{V}^k$, we create a $k$-partite graph $G' = (V_1 \cup \ldots \cup V_k, E')$ by adding edges corresponding to the edges of graph $G$ between nodes across partitions and find the minimum weight $k$-clique in the graph $G'$ using the faster algorithm. Computing the minimum weight $k$-clique out of all the graphs we consider gives the solution to the $\kClique$ instance on $G$. The total runtime is $\prod_{i=1}^k \frac n {n_i} \cdot O\left(\prod_{i=1}^k n_i \right)^{1-\eps} = n^{k - \Omega(\eps)} $ which would violate the $k$-Clique conjecture. The previous equality holds because of the assumption that $n_i = n_j^{\Theta(1)}$.
\end{proof}

\section{Hardness for sparse HMMs}
\label{app:sparse}

For sparse HMMs that have at most $m$ edges out of the $n^2$ possible ones, i.e. the transition matrix has at most $m$ non-zero probabilities, the $\Viterbi$ problem can be easily solved in $O(Tm)$ time. 
The lower bounds that we presented in the paper can be adapted directly for this case to show that no faster algorithm exists that runs in time $O(Tm)^{1-\eps}$. This can be easily seen via a padding argument. Consider a hard instance for $\Viterbi$ on a dense HMM with $\sqrt{m}$ states and $m$ edges. Adding $n-\sqrt{m}$ additional states with self-loops, we obtain a sparse instance with $n$ states and $m + n - \sqrt{m} = O(m)$ edges. Thus, any algorithm that computes the optimal Viterbi Path in $O(Tm)^{1-\eps}$ time for the resulting instance would solve the original instance with $\sqrt{m}$ states in $O\left(T(\sqrt{m})^2 \right)^{1-\eps}$ time contradicting the corresponding lower bound.

This observation directly gives the following lower bounds for $\Viterbi$ problem, parametrized by the number $m$ of edges in an HMM with $n$ states.

\begin{theorem}
	The $\Viterbi$ problem requires $\Omega(Tm)^{1-o(1)}$ time for an HMM with $m$ edges and $n$ states, assuming the APSP Conjecture.
\end{theorem}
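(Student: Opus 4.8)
The plan is to establish this by a straightforward padding argument that transfers the dense lower bound of Theorem~\ref{thm:viterbiapsp} to the sparse regime. Throughout I would assume $n \le m \le n^2$: the bound $m \le n^2$ is automatic, and when $m < n$ the claimed bound $\Omega(Tm)^{1-o(1)}$ is subsumed by the trivial $\Omega(T)$ time needed merely to output a path of length $T$, so that case carries no content. Fix such an $m$ and set $n' = \ceil{\sqrt{m}}$, noting $n' \le n$.

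First I would apply Theorem~\ref{thm:viterbiapsp} with the number of states set to $n'$: assuming the APSP Conjecture, there is a family of $\Viterbi$ instances $(A,B,S)$ on $n'$ states, with observation sequence $S$ of the prescribed length $T$, for which every algorithm needs $\Omega\!\left(T(n')^2\right)^{1-o(1)} = \Omega(Tm)^{1-o(1)}$ time. Crucially, the transition matrix $A$ produced by that reduction has only $O((n')^2) = O(m)$ finite (non-$\infty$) entries, since its underlying graph consists of a complete bipartite part on $V_1 \times V_2$ together with $O(n')$ extra edges and self-loops; thus this instance is already sparse in the sense of having $O(m)$ edges, just on $n'$ rather than $n$ states.

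Next I would pad the state set up to size $n$ without changing the optimum. Concretely, I keep states $1,\dots,n'$ with the transitions of $A$, add $n - n'$ new states each equipped only with a self-loop of weight $0$, and extend $B$ arbitrarily on the new states (say with $+\infty$ everywhere, which is immaterial). The resulting HMM has $n$ states and at most $O((n')^2) + (n - n') = O(m)$ edges, using $n \le m$. Since the start state is $u_0 = 1$, which lies in the original block, and no new state has an incoming edge from that block, no walk starting at state $1$ can ever reach a padded state; hence the optimal Viterbi path of the padded instance is exactly the optimal path of $(A,B,S)$ and has the same cost.

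Finally I would conclude by contradiction: an algorithm solving $\Viterbi$ on HMMs with $n$ states and $m$ edges in time $O(Tm)^{1-\eps}$ for a constant $\eps>0$, run on the padded instance, would solve the original $n'$-state instance in time $O(Tm)^{1-\eps} = O\!\left(T(n')^2\right)^{1-\eps}$, contradicting the lower bound from the second step. The argument has no real obstacle; the only points needing care — and the closest thing to a difficulty — are checking that the padded edge count stays $O(m)$ (this is exactly why $n \le m$ is the relevant regime) and verifying that the added states are genuinely unreachable from the start state, so that they can neither improve nor harm the optimum. Both are immediate from the construction.
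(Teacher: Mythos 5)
Your proposal is correct and is essentially the paper's own argument: pad a hard dense instance on $\sqrt{m}$ states (which automatically has $O(m)$ edges) with $n-\sqrt{m}$ extra self-loop states that the walk starting at state $1$ can never reach, so an $O(Tm)^{1-\eps}$ sparse algorithm would solve the $\sqrt{m}$-state instance in $O\bigl(T(\sqrt{m})^2\bigr)^{1-\eps}$ time, contradicting Theorem~\ref{thm:viterbiapsp}. Your extra bookkeeping (the regime $n\le m$, unreachability of the padded states) just makes explicit what the paper leaves implicit, and note that for genuine HMMs $m\ge n$ holds anyway since every state has an outgoing transition.
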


\begin{theorem}
	For any $C,\eps>0$, the $\Viterbi$ problem on $T = \Theta(m^{C})$ observations from an alphabet of size $\Theta(m^{\eps})$ requires $\Omega(Tm)^{1-o(1)}$ time assuming the $k$-Clique Conjecture for $k = \ceil{\frac C \eps}+2$.
\end{theorem}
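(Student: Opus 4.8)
The plan is to combine Theorem~\ref{thm:smallalpha} with a padding argument, exactly mirroring the APSP-based sparse theorem proved just above. The key observation is that the dense hard instances of Theorem~\ref{thm:smallalpha} already have a quadratic number of edges, so turning them into genuinely sparse instances on many states is just a matter of adding inert vertices.

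First I would rescale the parameters. Apply Theorem~\ref{thm:smallalpha} with the exponents $C' = 2C$ and $\eps' = 2\eps$ to an HMM on $\bar n = \sqrt m$ states. This produces a family of dense instances with $\bar n$ states (hence $\bar n^2 = m$ edges), with $T = \Theta(\bar n^{C'}) = \Theta(\bar n^{2C}) = \Theta(m^{C})$ observations over an alphabet of size $\Theta(\bar n^{\eps'}) = \Theta(\bar n^{2\eps}) = \Theta(m^{\eps})$, and such that solving $\Viterbi$ on them requires $\Omega(T\bar n^2)^{1-o(1)} = \Omega(Tm)^{1-o(1)}$ time assuming the $k$-Clique conjecture for $k = \ceil{\frac{C'}{\eps'}} + 2 = \ceil{\frac{C}{\eps}} + 2$, which is exactly the parameter in the statement.

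Next I would pad to $n$ states. Take such a hard instance on $\bar n = \sqrt m$ states and add $n - \bar n$ fresh states, each equipped only with a self-loop (of weight $0$, say) and no edges to or from the original $\bar n$ states; their observation costs can be set arbitrarily over the \emph{existing} alphabet, so $\sigma$ does not grow and $T$ is unchanged. For any $n$ with $\bar n \le n = O(m)$, the resulting HMM has $n$ states and $m + (n - \bar n) = O(m)$ edges, and $T = \Theta(m^{C}) = n^{\Theta(1)}$. Because the $\Viterbi$ path is pinned to $u_0 = 1$, a node of the original instance, and the new states are unreachable from it (no incoming edges), the optimal path never uses a padded state, so the optimal value is identical to that of the original $\bar n$-state instance. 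Consequently, an $O(Tm)^{1-\delta}$-time algorithm for $\Viterbi$ on the padded $n$-state, $O(m)$-edge instance would solve the original $\bar n$-state instance in $O(Tm)^{1-\delta} = O(T\bar n^2)^{1-\delta}$ time, contradicting the lower bound from the previous step; hence $\Viterbi$ on sparse HMMs requires $\Omega(Tm)^{1-o(1)}$ time, as claimed.

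The only real points to be careful about are bookkeeping ones: that with the doubled exponents the two scaling relations $T = \Theta(m^{C})$ and $\sigma = \Theta(m^{\eps})$ hold \emph{simultaneously} while the clique parameter lands precisely on $\ceil{\frac C\eps}+2$, that the constraint $T = n^{\Theta(1)}$ is still satisfied for the padded instance, and that the padded vertices are genuinely inert (self-loops only, no incoming edges, start node in the original part). None of this is deep — all the actual hardness is imported from Theorem~\ref{thm:smallalpha} — so I do not anticipate a substantive obstacle.
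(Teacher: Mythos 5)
Your proposal is correct and follows essentially the same route as the paper: the appendix proves this theorem by exactly the padding argument you describe (a hard dense instance on $\sqrt{m}$ states padded with $n-\sqrt{m}$ inert self-loop states, so an $O(Tm)^{1-\delta}$ algorithm would solve the dense instance in $O(T(\sqrt m)^2)^{1-\delta}$ time). Your explicit rescaling $C'=2C$, $\eps'=2\eps$ so that $T=\Theta(m^C)$, $\sigma=\Theta(m^\eps)$, and $k=\ceil{C/\eps}+2$ come out right is just the bookkeeping the paper leaves implicit, and it checks out.
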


\begin{theorem}
	The $\Viterbi$ problem requires $\Omega(Tm)^{1-o(1)}$ time when $T \leq \sqrt{m}$ even if the size of the alphabet is $\sigma=1$, assuming the APSP Conjecture.
\end{theorem}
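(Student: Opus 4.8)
The plan is to reduce to the unary-alphabet hardness of the \emph{dense} case via the padding template already used for the other sparse lower bounds in this section. Recall that the $\Walk$ problem is exactly $\Viterbi$ with $\sigma = 1$ and $B(u,1) = 0$ for all $u$, and that it requires $\Omega(Tn^2)^{1-o(1)}$ time when $T \le n$ under the APSP conjecture. So I would fix a hard family of $\Walk$ instances on $N$ states, with at most $N^2$ edges and observation length $T \le N$, and set $N = \sqrt m$; each such instance then has $\sqrt m$ states, at most $m$ edges, and obeys $T \le \sqrt m$.

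To turn this into a sparse instance on $n$ states --- where $n \ge \sqrt m$, which is automatic since always $m \le n^2$ --- I would add $n - \sqrt m$ new states, each carrying only a zero-weight self-loop and no other incident edge. The alphabet stays unary, the state count becomes $n$, and the number of transition edges becomes $m + (n - \sqrt m) = O(m)$, so the result is a legitimate sparse HMM. Because the walk must start at $u_0 = 1$, which lies in the original subgraph, and no original state has an edge into the new states, an optimal walk never enters the padding; hence the optimal value is unchanged, and an algorithm for the padded instance solves the original one.

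The contradiction then follows in the standard way: an $O(Tm)^{1-\eps}$-time algorithm for $\Viterbi$ on sparse HMMs with $n$ states, $m$ edges, $\sigma = 1$, and $T \le \sqrt m$ would, run on the padded instances, solve the original $\sqrt m$-state $\Walk$ problem in time $O\!\left(T(\sqrt m)^2\right)^{1-\eps} = O(Tm)^{1-\eps}$, violating the dense lower bound under APSP. The only subtlety --- and the single place the hypothesis $T \le \sqrt m$ enters --- is that the dense bound is available only when $T$ is at most the number of states; padding a $\sqrt m$-state instance therefore forces exactly the restriction in the statement. Everything else is the routine padding argument, so I do not expect any genuine obstacle here beyond this parameter bookkeeping.
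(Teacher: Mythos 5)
Your proposal is correct and is essentially the paper's own argument: the appendix proves all three sparse lower bounds by exactly this padding step (take the dense hard instance on $\sqrt m$ states --- here the unary-alphabet $\Walk$ instances with $T \le \sqrt m$ --- add $n-\sqrt m$ isolated self-loop states to get $O(m)$ edges, and note that an $O(Tm)^{1-\eps}$ algorithm would solve the dense instance in $O\left(T(\sqrt m)^2\right)^{1-\eps}$ time, contradicting the APSP-based bound). Your added observations (padding states unreachable from $u_0=1$, and the $T\le\sqrt m$ restriction coming from $T\le$ number of states in the dense instance) are just the details the paper leaves implicit.
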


\end{document}